\pgfplotsset{compat=1.16}
\newtheorem{theorem}{Theorem}
\newtheorem{corollary}{Corollary}
\newtheorem{lemma}{Lemma}
\newtheorem{definition}{Definition}
\newcommand{\dproc}[1]{\ensuremath{d_{proc}(#1)}}
\newcommand{\dprop}[1]{\ensuremath{d_{prop}(#1)}}
\newcommand{\drate}[1]{\ensuremath{d_{rate}(#1)}}
\newcommand{\dpdb}[1]{\ensuremath{\mathbf{d}\ifthenelse{\equal{#1}{}}{}{(#1)}}}
\newcommand{\dpdbmin}[1]{\ensuremath{\mathbf{d}^{min}\ifthenelse{\equal{#1}{}}{}{(#1)}}}
\newcommand{\dpdbmax}[1]{\ensuremath{\mathbf{d}^{max}\ifthenelse{\equal{#1}{}}{}{(#1)}}}
\newcommand{\Dpdb}[1]{\ensuremath{\mathbf{d}\ifthenelse{\equal{#1}{}}{}{\big(#1\big)}}}
\newcommand{\Dpdbmin}[1]{\ensuremath{\mathbf{d}^{min}\ifthenelse{\equal{#1}{}}{}{\big(#1\big)}}}
\newcommand{\Dpdbmax}[1]{\ensuremath{\mathbf{d}^{max}\ifthenelse{\equal{#1}{}}{}{\big(#1\big)}}}
\newcommand{\dpdbtransmin}[1]{\ensuremath{\mathbf{d}_{trans}^{min}\ifthenelse{\equal{#1}{}}{}{\big(#1\big)}}}
\newcommand{\dpdbtransmax}[1]{\ensuremath{\mathbf{d}_{trans}^{max}\ifthenelse{\equal{#1}{}}{}{\big(#1\big)}}}
\newcommand{\dpdbfull}[1]{\ensuremath{\big[\dpdbmin{#1}, \dpdbmax{#1}\big]}}
\newcommand{\TT}{\ensuremath{\mathcal{F}}}
\newcommand{\path}[1]{\ensuremath{#1.\textit{path}}}
\newcommand{\size}[1]{\ensuremath{#1.\textit{size}}}
\newcommand{\period}[1]{\ensuremath{#1.\textit{period}}}
\newcommand{\release}[1]{\ensuremath{#1.\textit{release}}}
\newcommand{\phase}[1]{\ensuremath{#1.\textit{phase}}}
\newcommand{\ete}[1]{\ensuremath{#1.\textit{lat}}}
\newcommand{\jitter}[1]{\ensuremath{#1.\textit{jitter}}}
\newcommand{\rel}[1]{\ensuremath{#1.rel}}
\renewcommand{\v}[2]{\ensuremath{v_{\scriptscriptstyle#2}^{\scriptscriptstyle #1}}}
\newcommand{\vl}[1]{\ensuremath{v_{\scriptscriptstyle#1}^{\scriptscriptstyle n(#1)}}}
\newcommand{\Conf}{\ensuremath{\mathcal{C}}}
\renewcommand{\S}[1]{\ensuremath{\mathcal{S}\ifthenelse{\equal{#1}{}}{}{\big(#1\big)}}}
\newcommand{\Smin}[1]{\ensuremath{\mathcal{S}^{min}\ifthenelse{\equal{#1}{}}{}{\big(#1\big)}}}
\newcommand{\Smax}[1]{\ensuremath{\mathcal{S}^{max}\ifthenelse{\equal{#1}{}}{}{\big(#1\big)}}}
\newcommand{\R}[1]{\ensuremath{\mathcal{R}\ifthenelse{\equal{#1}{}}{}{\big(#1\big)}}}
\newcommand{\Rmin}[1]{\ensuremath{\mathcal{R}^{min}\ifthenelse{\equal{#1}{}}{}{\big(#1\big)}}}
\newcommand{\Rmax}[1]{\ensuremath{\mathcal{R}^{max}\ifthenelse{\equal{#1}{}}{}{\big(#1\big)}}}
\newcommand{\GCL}{\ensuremath{\mathcal{S}_{GCL}}}
\newcommand{\SE}[1]{\ensuremath{\mathcal{S}^\mathcal{E}\ifthenelse{\equal{#1}{}}{}{\big(#1\big)}}}
\newcommand{\IE}[1]{\ensuremath{B^\mathcal{E}\ifthenelse{\equal{#1}{}}{}{\big(#1\big)}}}
\newcommand{\Si}[2]{\ensuremath{\mathcal{S}_{#2}\ifthenelse{\equal{#1}{}}{}{\big(#1\big)}}}
\newcommand{\I}[1]{\ensuremath{B\ifthenelse{\equal{#1}{}}{}{\big(#1\big)}}}
\newcommand{\add}[3]{\ensuremath{\big(#1{} + #2{}\big)\big(#3\big)}}
\newcommand{\sub}[3]{\ensuremath{\big(#1{} - #2{}\big)\big(#3\big)}}
\newcommand{\Prob}{\ensuremath{\mathbb{P}}}
\newcommand{\E}{\ensuremath{\mathcal{E}}}
\newcommand{\T}[1]{\ensuremath{\mathcal{T}\ifthenelse{\equal{#1}{}}{}{\big(#1\big)}}}
\newcommand{\D}[1]{\ensuremath{\mathcal{D}\ifthenelse{\equal{#1}{}}{}{\big(#1\big)}}}
\newcommand{\C}{\ensuremath{\mathcal{C}}}
\newcommand{\V}{\ensuremath{\mathcal{V}}}
\newcommand{\estepref}[2]{$\langle#1\rangle#2$}
\newcommand{\eD}[1]{\ensuremath{\mathcal{D}^e\ifthenelse{\equal{#1}{}}{}{\big(#1\big)}}}
\newcommand{\eDtrans}[1]{\ensuremath{\mathcal{D}_{trans}^e\ifthenelse{\equal{#1}{}}{}{\big(#1\big)}}}
\newcommand{\Dtrans}[1]{\ensuremath{\mathcal{D}_{trans}\ifthenelse{\equal{#1}{}}{}{\big(#1\big)}}}
\newcommand{\IT}{\textsf{\small IsochronousTalker}}
\newcommand{\TC}{\textsf{\small TransmissionConsistency}}
\newcommand{\ST}{\textsf{\small SequentialTransmission}}
\newcommand{\FIFO}{\textsf{\small FIFO}}
\newcommand{\GCLC}{\textsf{\small GCL-Encapsulation}}
\newcommand{\GCLS}{\textsf{\small GCL-Progress}}
\newcommand{\PC}{\textsf{\small PolicingConsistency}}
\newcommand{\PSFP}{\textsf{\small PSFP}}
\newcommand{\TP}{\textsf{\small TransmissionPolicing}}
\newcommand{\pclink}[2]{\ensuremath{[\v{#1-1}{#2}, \v{#1}{#2}]}}
\newcommand{\clink}[2]{\ensuremath{[\v{#1}{#2}, \v{#1+1}{#2}]}}
\newcommand{\nclink}[2]{\ensuremath{[\v{#1+1}{#2}, \v{#1+2}{#2}]}}
\newcommand{\thinskip}[0]{\hspace*{0.16667em}\relax}
\newcommand{\jrdash}[2]{\unskip #1\thinskip #2\thinskip\ignorespaces}
\newcommand{\ldash}[0]{\jrdash{\empty}{\hbox{---}\nobreak}}
\newcommand{\rdash}[0]{\jrdash{\nobreak}{---}}
\def\showtodos{1}
\newcommand{\todo}[1]{\ifthenelse{\equal{\showtodos}{1}}{\textcolor{lred}{(TODO: #1)}}{\ignorespaces}}
\def\err at (#1,#2){
    \coordinate (a) at (#1,#2);
    \draw[black!80,ultra thick] ($(a) + 1/3*(-0.5,1.5)$) -- ($(a) + 1/3*(0.5,0.5)$);
    \draw[black!80,ultra thick] ($(a) + 1/3*(0.5,1.5)$) -- ($(a) + 1/3*(-0.5,0.5)$);
}
\def\intervalplt at (#1,#2,#3){
    \draw (axis cs: #1, #2) edge[very thick] (axis cs: #1, #3);
    \draw (axis cs: #1-0.2, #2) edge[very thick] (axis cs: #1+0.2, #2);
    \draw (axis cs: #1-0.2, #3) edge[very thick] (axis cs: #1+0.2, #3);
}
    \pgfmathfloatmultiply{\pgfmathresult}{#2}%
    \pgfmathfloatifapproxequalrel{\pgfmathresult}{#2}{\def\pgfmathresult{5}}{}%
\definecolor{F1}{gray}{0.2}
\definecolor{F2}{gray}{0.4}
\definecolor{F3}{gray}{0.6}
\definecolor{f1}{gray}{0.4}
\definecolor{f2}{gray}{0.6}
\definecolor{f3}{gray}{0.8}
\definecolor{green}{HTML}{D5E8D4}
\definecolor{blue}{HTML}{DAE8FC}
\definecolor{test}{HTML}{D8E8E8}
\definecolor{orange}{HTML}{FFE6CC}
\definecolor{red}{HTML}{F8CECC}
\definecolor{lgreen}{HTML}{82B366}
\definecolor{lblue}{HTML}{6C8EBF}
\definecolor{lorange}{HTML}{D79B00}
\definecolor{lred}{HTML}{B85450}
\def\BibTeX{{\rm B\kern-.05em{\sc i\kern-.025em b}\kern-.08em
    T\kern-.1667em\lower.7ex\hbox{E}\kern-.125emX}}
\begin{document}

\title{End-to-End Reliability in Wireless IEEE~802.1Qbv Time-Sensitive Networks}

\author{\IEEEauthorblockN{Simon Egger$^\ast$, James Gross$^\dagger$, Joachim Sachs$^\mathsection$, Gourav P. Sharma$^\dagger$, Christian Becker$^\ast$, Frank Dürr$^\ast$}
\IEEEauthorblockA{{}$^\ast$University of Stuttgart, {}$^\dagger$KTH Royal Institute of Technology, Stockholm, {}$^\mathsection$Ericsson Research, Aachen\\
\{simon.egger,$\,$christian.becker,$\,$frank.duerr\}@ipvs.uni-stuttgart.de, \{jamesgr,$\,$gpsharma\}@kth.se, joachim.sachs@ericsson.com
}}

\maketitle

\begin{abstract}
  Industrial cyber-physical systems require dependable network communication with formal end-to-end reliability guarantees.
  Striving towards this goal, recent efforts aim to advance the integration of 5G into Time-Sensitive Networking~(TSN).
  However, we show that IEEE~802.1Qbv TSN schedulers that are unattuned to 5G packet delay variations may jeopardize any reliability guarantees provided by the 5G system.
  We demonstrate this on a case where a $\qty{99.99}{\percent}$ reliability in the inner 5G network diminishes to below $\qty{10}{\percent}$ when looking at end-to-end communication in TSN.
  In this paper, we overcome this shortcoming by introducing Full Interleaving Packet Scheduling~(FIPS) as a wireless-friendly IEEE~802.1Qbv scheduler.
  To the best of our knowledge, FIPS is the first to provide formal end-to-end QoS guarantees in wireless TSN.
  FIPS allows a controlled batching of TSN streams, which improves schedulability in terms of the number of wireless TSN streams by a factor of up to~$\times 45$.
  Even in failure cases, FIPS isolates the otherwise cascading QoS violations to the affected streams and protects all other streams.  
  With formal end-to-end reliability, improved schedulability, and fault isolation, FIPS makes a substantial advance towards dependability in wireless TSN. 
\end{abstract}

\section{Introduction}
Time-Sensitive Networking (TSN) and 5G are widely recognized as key enabling technologies for dependable (wireless) industrial networking~\cite{9299391}.
Envisioned use cases \ldash like networked control systems to govern a fleet of automated guided vehicles~(AGV)~\cite{3gpp.22.104, det6g_usecases} \rdash require ultra-reliable low latency communication, making end-to-end reliability and end-to-end latency the key performance indicators for dependability.
3GPP defines the reliability of a stream as the percentage of frames that arrive within their time constraints~\cite{3gpp.22.261}. 
However, it is crucial to note that 5G reliability solely pertains to the inner mobile networks and does not provide end-to-end reliability in TSN.
Neglecting this difference can nullify any quality-of-service~(QoS) guarantee made by the 5G system.
For instance, our simulation results show a $\qty{99.99}{\percent}$ 5G reliability plummeting to an end-to-end reliability of below $\qty{10}{\percent}$.

Time-Sensitive Networking captures a set of standards under IEEE 802.1, ranging from a general TSN bridge specification~\cite{802.1Q} to specific profiles for industrial automation~\cite{60802}.
In this work, we focus on scheduling time-triggered streams in networks with both wired and wireless elements.
Time-triggered traffic is periodic and specifies its QoS requirements through bounded end-to-end reliability, end-to-end latency, and arrival time jitter.
Compared to analytical techniques that inspect the guarantees of a given schedule (e.g., network calculus~\cite{le2001network}) or frameworks that provide network diagnostics (e.g., fault localization~\cite{10682902}), we consider the problem of synthesizing a schedule that satisfies given QoS requirements.
For this purpose, we employ the IEEE 802.1Qbv Time-Aware Shaper~(TAS)~\cite{802.1Qbv} to forward frames according to precise timetables, called gate control lists (GCLs), along each hop.
Computing GCLs that satisfy the QoS requirements of all streams is, in general, an NP-hard optimization problem.

An extensive body of research exists on synthesizing GCLs in wired TSN, e.g., with ILP solvers~\cite{10.1145/3139258.3139289}, SMT solvers~\cite{8894249}, (meta-)heuristics~\cite{nwps}, or deep learning~\cite{10228875}.
However, TAS is highly susceptible to runtime uncertainties, with timing inaccuracies in the range of mere microseconds or sporadic frame loss causing total schedule breakdowns~\cite{Craciunas2016RTNS}.
In comparison, 5G introduces stochastic packet delays with variations that are often in the range of milliseconds~\cite{downlink_example_histogram}.
This makes conventional scheduling techniques with deterministic system models unsuitable (e.g., the above~\cite{nwps,10228875,10.1145/3139258.3139289,8894249}), as they would have to rely on (non-robust) scalar or worst-case 5G channel assumptions.
In alignment with state-of-the-art solutions of related scheduling domains~\cite{infocom24_best_paper,diaz2023robust}, we therefore advocate for moving towards stochastic and robust scheduling approaches that can provide formal end-to-end QoS guarantees.

Within this context, we introduce Full Interleaving Packet Scheduling~(FIPS) as a wireless-friendly IEEE 802.1Qbv scheduler that does not rely on non-robust 5G channel assumptions.
Instead, FIPS computes 5G packet delay budgets (PDBs) based on the streams QoS requirements and the 5G link information, e.g., through packet delay histograms.
PDBs specify the target delay for the 5G system and are of the form:
``With a 5G reliability of $\qty{99.99}{\percent}$, the 5G packet delays for stream $F$ are lower- and upper-bounded by the budget interval $[\qty{3}{\ms}, \qty{15}{\ms}]$.''
Under the condition that the 5G system satisfies this requirement, FIPS extends the QoS guarantees end-to-end, for example, to:
``With an end-to-end reliability of $\qty{99.99}{\percent}$, each frame of $F$ arrives at the TSN listener with a latency below $\qty{20}{\ms}$ and jitter below $\qty{100}{\us}$.''

We summarize our main contributions as follows:

\textbf{Formal End-to-End QoS:}
FIPS computes TSN configurations that are robust against bounded runtime uncertainties, yielding (provable) end-to-end QoS guarantees as above.

\textbf{Fault Isolation:}
Errors do not cascade through the entire network but stay isolated to the affected streams.
In particular, FIPS continuously upholds the QoS of wired streams even if the 5G system can no longer sustain the 5G PDBs.

\textbf{Improved Schedulability:} 
FIPS relaxes conventional temporal isolation constraints from IEEE 802.1Qbv scheduling techniques in wired networks~\cite{nwps,Craciunas2016RTNS}.
Our evaluations show that this relaxation improves schedulability in terms of the number of wireless streams by a factor of up to $\times 45$.

The remainder of this paper is structured as follows: 
Section~\ref{sec:related_work} discusses related work.
Section~\ref{sec:background} provides background on 5G/TSN and defines our system model.
Section~\ref{sec:problem_description} captures the problems of current IEEE 802.1Qbv scheduling techniques in wireless TSN and summarizes our problem statement.
Sections~\ref{sec:robustness_and_feasibility} and Section~\ref{sec:fips} introduces the concept of robust scheduling and defines FIPS.
Finally, Section~\ref{sec:eval} contains our evaluation results and Section~\ref{sec:conclusion} concludes this work.

\section{Related Work} \label{sec:related_work}
Scheduling in IEEE 802.1Qbv is closely related to other NP-hard optimization problems.
We therefore review literature both within and outside of the TSN scheduling domain.

\subsection{Scheduling in Time-Sensitive Networks}
System components in wired Time-Sensitive Networks are often modelled to be strictly deterministic to configure the IEEE 802.1Qbv GCLs~\cite{nwps,10228875,10.1145/3139258.3139289,8894249}.
Approaches that consider uncertain runtime behavior most commonly focus on time synchronization errors~\cite{10228980,10144237,Craciunas2016RTNS}, sporadic frame loss~\cite{Craciunas2016RTNS,10.1145/3458768,FENG2022102381,9488750}, or permanent link failures~\cite{9488750,8607244,8474201,reusch2022dependability}.
While these solutions remain applicable in wired network partitions, none of them are designed to cope with non-negligible 5G packet delay variations in the range of milliseconds.

Out of the above solutions, \cite{nwps} and~\cite{Craciunas2016RTNS} are most closely related to our approach and considered as state-of-the-art in wired TSN.
The authors strictly separate frame transmissions to isolate potential transmission faults and can account for bounded clock skew.
While their temporal isolation constraints can be extended to cover 5G packet delay variations, our measurements show that these extensions do not scale in terms of the number of accepted wireless streams.
FIPS improves schedulability by allowing a controlled batching of streams.

The approaches in \cite{9212049} and \cite{9940254} aim to advance the integration of 5G and TSN by allowing the IEEE 802.1Qbv scheduler to allocate 5G resource blocks for wireless streams.
However, they rely on scalar (e.g., average, median, or maximum) 5G packet delays, for which we demonstrate that they cannot provide any end-to-end QoS guarantees. 
Moreover, we argue that 5G-internal resource management (e.g., resource allocation~\cite{8736403}, link adaptation~\cite{9488790}, and QoS provisioning in 5G vRAN~\cite{infocom24_5gvran}) remains a complementary task by itself.
Compared to requiring the 5G system to expose its internal state, we therefore design FIPS to work with 5G PDBs.

More recently, the authors of~\cite{10682834} recognized the problem of cascading faults under non-robust TSN configurations.
They propose a shielding mechanism that discards frames with unexpected delays if they endanger the latency guarantees of other streams.
Complementary to their approach, FIPS computes TSN configurations with an increased tolerance for delay variations (i.e., as captured by the 5G PDBs), preventing excessive frame loss caused by such shielding mechanisms.

\subsection{Scheduling with Uncertainty in Related Domains}
Outside of the TSN literature, state-of-the-art scheduling solutions in multi-hop networks provide remarkable near-optimal latency and reliability guarantees~\cite{6155625,infocom24_best_paper}.
However, they often assume high-cost forwarding buffers, e.g., one dedicated buffer per stream per bridge~\cite{6155625}.
This makes them unsuitable for scheduling in TSN, where each egress port is equipped with a maximum of eight FIFO queues.

Job-shop scheduling covers a vast research domain that encounters similar obstacles when transitioning from deterministic to uncertain task durations;
a survey is provided in~\cite{XIONG2022105731}.
For instance, D{\'\i}az et al.~\cite{diaz2023robust} showed that unattuned (e.g., scalar) approaches are not robust against runtime uncertainty.
In alignment with their observations, FIPS is designed to compute TSN configurations that are robust against bounded uncertainty intervals, as captured by the 5G PDBs.

\begin{figure}[b]
  \centering
  \resizebox{0.74\columnwidth}{!}{%
  \begin{tikzpicture}
    \node[draw, trapezium, align=center, minimum width=3.5cm, minimum height=0.55cm, trapezium stretches body] (psfp) at (0,0) {PSFP $\R{}$};
    \draw (0,0.6) edge[->,thick] node[pos=0,yshift=2mm] {ingress port} (psfp);

    \node[align=center,black!70] (cnc) at (3.5,1) {configured by the CNC};
    \draw[black!50] (cnc) edge[->,dashed] (0.8,0);
    \draw[black!50] (cnc) edge[->,dashed] (4,0.2);

    \node[draw,pattern=queue,minimum width=0.6cm,minimum height=1.5cm,rounded corners=2pt] (q0) at (-1.45,-1.5) {};
    \node[draw,pattern=queue,minimum width=0.6cm,minimum height=1.5cm,rounded corners=2pt] (q1) at (-0.55,-1.5) {};
    \node (dots) at (0.45,-1.5) {\Large $\cdots$};
    \node[below=0mm of dots] {\scriptsize $\leq 8$ queues};
    \node[draw,pattern=queue,minimum width=0.6cm,minimum height=1.5cm,rounded corners=2pt] (q2) at (1.45,-1.5) {};

    \draw (q0) edge[<-,thick] node[right,pos=0.6] {0} (\tikztostart |- psfp.south);
    \draw (q1) edge[<-,thick] node[right,pos=0.6] {1} (\tikztostart |- psfp.south);
    \draw (q2) edge[<-,thick] node[right,pos=0.6] {7} (\tikztostart |- psfp.south);

    \node[draw,below=2mm of q0,rounded corners,minimum width=0.6cm,fill=black!60] (g0) {c};
    \node[draw,below=2mm of q1,rounded corners,minimum width=0.6cm,fill=black!60] (g1) {c};
    \node[draw,below=2mm of q2,rounded corners,minimum width=0.6cm,fill=white] (g2) {o};
    \node (egress) at (0,-3.7) {egress port};

    \draw (q0) edge[thick] (g0);
    \draw (g0) edge[->,thick,out=-90,in=90] (egress);
    \draw (q1) edge[thick] (g1);
    \draw (g1) edge[->,thick,out=-90,in=90] (egress);
    \draw (q2) edge[thick] (g2);
    \draw (g2) edge[->,thick,out=-90,in=90] (egress);

    \node (gcl0) at (4,0) {TAS $\GCL$};
    \node (gcl1) at (4,-1.6) {
      \begin{tabular}{ll}
	\toprule
	time & gates\\
	\midrule
	$[0,t_0)$ & oo$\ldots$o\\
	$[t_0,t_1)$ & cc$\ldots$o\\
	$[t_1,t_2)$ & cc$\ldots$o\\
	$[t_2,H)$ & co$\ldots$c\\
      \end{tabular}
    };
    \node[fill=black!20,inner ysep=1pt,inner xsep=2pt] (gcl2) at (4,-2.18) {
      \begin{tabular}{cc}
	$[t_1,t_2)$ & cc$\ldots$o\\
      \end{tabular}
    };
    \node[draw,black!60,rounded corners,fit={(gcl0) (gcl1)}, inner sep=1pt] {};

    \draw (gcl2) edge[dashed,thick,black!50,out=180,in=0] (g2);
    \draw (g2) edge[dashed,thick,black!50] (g1);
    \draw (g1) edge[dashed,thick,black!50] (g0);

    \node[below=0.6cm of gcl2,xshift=7mm,align=left] {o = open\\c = closed};
  \end{tikzpicture}
  }
  \caption{Port-to-port model of TSN bridges with Per-Stream Filtering and Policing~(PSFP) and the Time-Aware Shaper~(TAS).} \label{fig:tsn_model}
\end{figure}
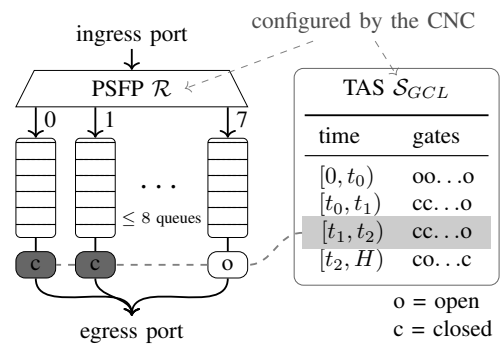

\section{Background and System Model} \label{sec:background}
Next, we provide an overview of relevant standards for Time-Sensitive Networking~(TSN) and its proposed integration with 5G.
We then introduce our system model and define the end-to-end QoS requirements for streams in wireless TSN.

\subsection{Time-Sensitive Networking} \label{sec:tsn}
TSN is a set of IEEE 802.1 standards to provide real-time communication in Ethernet networks with bounded latency and low packet delay variations for each stream.
Streams originate at exactly one talker, traverse multiple bridges, and terminate at one or more listeners.
To schedule time-triggered streams, we consider a centralized network controller~(CNC)~\cite{8514112} with a global view of the network and streams to configure the TSN mechanisms (as described next) at each bridge.
Moreover, we assume a bounded clock skew between network devices~\cite{802.1AS} and that talkers are synchronized to the network schedule.

For time-triggered streams, i.e., periodic traffic with fixed frame sizes, two TSN mechanisms are of central interest: the Time-Aware Shaper~(TAS)~\cite{802.1Qbv} and Per-Stream Filtering and Policing~(PSFP)~\cite{802.1Qci}.
Fig.~\ref{fig:tsn_model} illustrates the operation of TAS and PSFP for a bridge.
When a frame $f$ arrives at the bridge $u$, each frame is first mapped to its corresponding TSN stream via its source/destination MAC address, VLAN identifier, and priority code point~(PCP) value.
PSFP then verifies that the frame adheres to the stream specification, i.e., by arriving within an expected interval $\R{u, f} = [\text{rx}^{min}, \text{rx}^{max}]$. 
If a violation is detected, the frame is discarded; otherwise the frame is enqueued.
To determine the queue where $f$ is appended, the bridge consults its forwarding table (specifying the egress port) and the 3-bit PCP value in the frame's header (specifying one out of eight FIFO queues per egress port).

When the egress port is free, TAS uses gates (one per queue) to select the next frame for transmission.
Whether the gates are currently open or closed is determined by the active entry in the gate control list~(GCL).
While finding suitable GCLs is a computationally hard problem that is done by the CNC in advance, they simplify the transmission selection at runtime:
TAS selects the frame at the head of the highest-priority queue that is non-empty and has an open gate.
In the case of Fig.~\ref{fig:tsn_model}, the third GCL entry is currently active and only opens the gate of the highest-priority queue; 
this allows the first frame of the queue to be forwarded without any delays from lower-priority traffic.
Finally, note that the GCL is also periodic and repeats after the least common multiple of the streams' periods; we call this the hypercycle $H$ of the TSN configuration.

\begin{figure}[b]
  \centering
  \resizebox{\columnwidth}{!}{%
  \begin{tikzpicture}
      \node[draw,minimum width=5.75cm,minimum height=3.5cm,rounded corners] at (2.5,0.25) {};
      \node[draw,minimum width=5cm,fill=white] (5G) at (2.5,2) {Logical 5G-TSN Bridge};

      \node[draw,fill=black!20,minimum height=0.8cm,] (dstt1) at (0,1) {DS-TT};
      \node[draw,minimum height=0.8cm] (ue1) at (0.97,1) {UE};
      \draw (-1,1) edge[thick] (dstt1);

      \node[draw,fill=black!20,minimum height=0.8cm] (dstt2) at (0,-0.9) {DS-TT};
      \node[draw,minimum height=0.8cm] (ue2) at (0.97,-0.9) {UE};
      \draw (-1,-0.9) edge[thick] (dstt2);

      \node[draw,minimum height=0.8cm] (gnb) at (2.5,0.05) {RAN};
      \node[draw,minimum height=0.8cm] (upf) at (4,0.05) {CN};
      \node[draw,fill=black!20,minimum height=0.8cm] (nwtt) at (5,0.05) {NW-TT};

      \draw (ue1) edge[thick,dashed,out=0,in=140] (gnb);
      \draw (ue2) edge[thick,dashed,out=0,in=-140] (gnb);
      \draw (gnb) edge[thick] (upf);
      \draw (nwtt) edge[thick] (6,0.05);

      \node[draw, fill=black!15, shape=cloud, aspect=1.7, inner sep=1pt, align=center] (TSN1) at (7.4,0.05) {TSN-capable\\Edge Platform};
      \node[draw, fill=black!15, shape=cloud, aspect=1.7, align=center] (TSN2) at (-2,1) {Robot\\\scriptsize(TSN System)};
      \node[draw, fill=black!15, shape=cloud, aspect=1.7, align=center] (TSN3) at (-2,-0.9) {Robot\\\scriptsize(TSN System)};

      \node[draw, align=center, fill=black!30] (CNC) at (2.5,3.25) {TSN Centralized\\Network Controller};

      \draw (CNC) edge[<->,black!50,very thick] (5G);
      \draw (CNC) edge[<->,black!50,very thick,out=0,in=110] (TSN1);
      \draw (CNC) edge[<->,black!50,very thick,out=180,in=70] (TSN2);
      \draw (CNC) edge[<->,black!50,very thick,out=180,in=70] (TSN3);
  \end{tikzpicture}
  }
  \caption{Integration of 5G and TSN, as standardized by~\cite{3gpp.23.501}.}\label{fig:5g_tsn_bridge}
\end{figure}
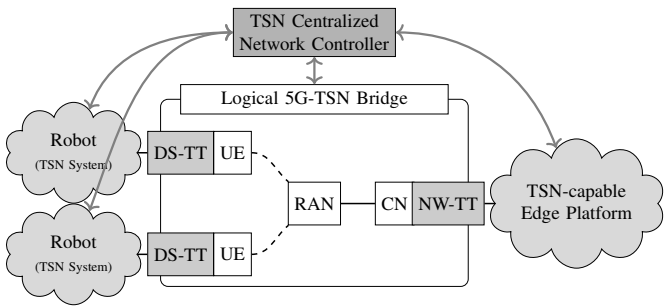

\subsection{The 5G System as a Logical TSN Bridge} \label{sec:5g_tsn}
To provide the same TSN mechanisms in networks that require both wired and wireless elements, e.g., free-moving robots in an industrial setting, 3GPP recently standardized an architecture to expose the 5G system as a logical TSN bridge~\cite{3gpp.23.501}. 
An overview of this integration is provided in Fig.~\ref{fig:5g_tsn_bridge}:
The TSN systems on the left show two moving robots, containing a single TSN end-station or a more complex internal network.
Each robot has a user equipment~(UE) to connect wirelessly to the 5G radio access network~(RAN).
When data is sent to the edge computing platform on the right, the RAN forwards the frames via the 5G core network (CN).

Instead of exposing the entire internal state (e.g., 5G resource allocation or session management), the 5G system provides device-side~(DS-TT) and network-side TSN translators~(NW-TT).
From the perspective of the TSN controller, the 5G system thus appears as a logical TSN bridge that supports the TSN mechanisms, as in Section~\ref{sec:tsn}.
Still, Fig.~\ref{fig:delays} shows an evident non-functional difference in the port-to-port delays of wired bridges and logical 5G-TSN bridges:
For wired bridges, the port-to-port delay corresponds to the processing delay and exhibits only small delay variations below $\qty{1}{\us}$.
For logical 5G-TSN bridges, however, the uplink port-to-port delay covers the entirety from the DS-TT transmission start until the NW-TT reception time, yielding millisecond delays and millisecond delay variations (a difference by three orders of magnitude).

The scheduler must account for this difference when synthesizing TSN configurations.
We therefore assume that the 5G system reports port-to-port delay histograms to the CNC, as in Fig.~\ref{fig:delays}(b).
Different streams can be served with different histograms.
For this work, we assume that the histograms are stationary and stay valid for any TSN scheduling decision.
However, even in this idealized setting, we show that existing approaches cannot provide any QoS guarantees or do not scale.

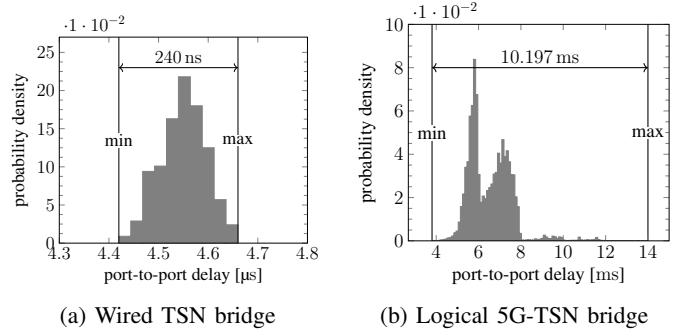
\begin{figure}
  \begin{subfigure}{0.48\columnwidth}
    \resizebox{\textwidth}{!}{%
      \begin{tikzpicture}
	  \begin{axis}[
	      ymin=0, ymax=0.27,
	      xmin=4.3,xmax=4.8,
	      minor y tick num = 3,
	      scaled y ticks=real:0.01,
	      area style,
	      xlabel={port-to-port delay [\si{\us}]},
	      ylabel={probability density},
	      font=\Large,
	      ]
	      \addplot+[ybar interval,mark=no,black,fill=black,semitransparent] plot coordinates { 
		  (4.42, 0.009) (4.444,0.029) (4.468,0.094) (4.492,0.101) (4.516,0.163) (4.54,0.218) (4.564,0.18) (4.588,0.125) (4.612,0.057) (4.636,0.024) (4.66,0)  
	      };
	      \draw (axis cs:4.42,0) edge[thick] node[inner sep=1pt,fill=white,pos=0.5] {min} (axis cs:4.42,0.27);
	      \draw (axis cs:4.66,0) edge[thick] node[inner sep=1pt,fill=white,pos=0.5] {max} (axis cs:4.66,0.27);
	      \draw (axis cs:4.42,0.23) edge[thick,<->] node[above] {$\qty{240}{\ns}$} (axis cs:4.66,0.23);
	  \end{axis}
      \end{tikzpicture}
    }
    \caption{Wired TSN bridge}
  \end{subfigure}
  \hfill
  \begin{subfigure}{0.48\columnwidth}
    \resizebox{\textwidth}{!}{%
      \begin{tikzpicture}
	  \begin{axis}[
	      ymin=0, ymax=0.1,
	      xmax=15,
	      minor y tick num = 3,
	      scaled y ticks=real:0.01,
	      area style,
	      xlabel={port-to-port delay [\si{\ms}]},
	      ylabel={probability density},
	      font=\Large,
	      ]
	      \addplot+[ybar interval,mark=no,black,fill=black,semitransparent] plot coordinates { 
		  (3.803,1e-05) (3.906,0.00011) (4.009,0.00011) (4.112,0.00029) (4.215,0.00041) (4.318,0.00055) (4.421,0.00088) (4.524,0.00136) (4.627,0.00161) (4.73,0.00224) (4.833,0.00372) (4.936,0.00452) (5.039,0.00866) (5.142,0.01582) (5.245,0.02163) (5.348,0.03399) (5.451,0.03653) (5.554,0.0576) (5.657,0.06254) (5.76,0.08382) (5.863,0.06751) (5.966,0.03065) (6.069,0.01778) (6.172,0.0207) (6.275,0.01947) (6.378,0.02322) (6.481,0.02457) (6.584,0.02623) (6.687,0.03085) (6.79,0.03164) (6.893,0.0405) (6.996,0.03587) (7.099,0.04674) (7.202,0.03896) (7.305,0.0416) (7.408,0.03652) (7.511,0.03066) (7.614,0.03047) (7.717,0.02029) (7.82,0.01577) (7.923,0.00345) (8.026,0.00041) (8.129,0.00026) (8.232,0.00044) (8.335,0.00048) (8.438,0.00033) (8.541,0.00041) (8.644,0.00093) (8.747,0.00057) (8.85,0.00075) (8.953,0.00116) (9.056,0.00262) (9.159,0.0018) (9.262,0.00104) (9.365,0.00103) (9.468,0.00219) (9.571,0.00187) (9.674,0.00109) (9.777,0.00183) (9.88,0.00148) (9.983,0.00039) (10.086,0.00017) (10.189,0.0002) (10.292,0.0002) (10.395,0.00028) (10.498,0.00024) (10.601,0.0003) (10.704,0.00171) (10.807,0.00045) (10.91,0.00027) (11.013,0.00037) (11.116,0.0005) (11.219,0.00033) (11.322,0.00046) (11.425,0.0007) (11.528,0.00158) (11.631,0.00057) (11.734,0.00023) (11.837,0.0002) (11.94,5e-05) (12.043,1e-05) (12.146,0.0) (12.249,1e-05) (12.352,1e-05) (12.455,0.0) (12.558,3e-05) (12.661,6e-05) (12.764,1e-05) (12.867,1e-05) (12.97,1e-05) (13.073,1e-05) (13.176,1e-05) (13.279,0.0) (13.382,0.0) (13.485,4e-05) (13.588,0.0) (13.691,2e-05) (13.794,0.0) (13.897,2e-05) (14.0,0.0) 
	      };
	      \draw (axis cs:3.803,0) edge[thick] node[inner sep=1pt,fill=white,pos=0.5] {min} (axis cs:3.803,0.1);
	      \draw (axis cs:14,0) edge[thick] node[inner sep=1pt,fill=white,pos=0.5] {max} (axis cs:14,0.1);
	      \draw (axis cs:3.803,0.08) edge[thick,<->] node[above] {$\qty{10.197}{\ms}$} (axis cs:14,0.08);
	  \end{axis}
      \end{tikzpicture}
    }
    \caption{Logical 5G-TSN bridge}
  \end{subfigure}
  \caption{Packet delay characteristics, measured by \cite{downlink_example_histogram}.} \label{fig:delays}
\end{figure}
\begin{figure*}[ht]
  \centering
  \resizebox{0.91\textwidth}{!}{%
  \begin{tikzpicture}
    \def\offset{0.5}
    \def\scale{0.35}
    \def\steps{2}

    \node at (-12, -1.5) {
      \begin{tikzpicture}
	\draw[black!60] (-0.25,-1.5) rectangle ++(2.5,3) node[above,xshift=-1.25cm] {\footnotesize Logical 5G-TSN Bridge};

      	\node[draw,fill=white,rounded corners,inner sep=5pt] (DS1) at (0,1) {$T^{DS}_1$};
      	\node[draw,fill=white,rounded corners,inner sep=5pt] (DS2) at (0,-1) {$T^{DS}_2$};
      	\node[draw,fill=white,rounded corners,inner sep=5pt] (NW) at (2,0) {$B^{NW}$};
      	\node[draw,rounded corners,inner sep=5pt] (B1) at (3.5,0) {$B_1$};
      	\node[draw,rounded corners,inner sep=5pt] (T3) at (3,-1.5) {$T_3$};
      	\node[draw,rounded corners,inner sep=5pt] (L1) at (5,1) {$L_1$};
      	\node[draw,rounded corners,inner sep=5pt] (L2) at (5,-1) {$L_2$};

	\draw[thick,dashed] (DS1) -- (NW);
	\draw[thick,dashed] (DS2) -- (NW);
	\draw[thick] (NW) -- (B1);
	\draw[thick] (T3) -- (B1);
	\draw[thick] (B1) -- (L1);
	\draw[thick] (B1) -- (L2);

	\draw[thick,->,F1,rounded corners=10pt] (DS1.-10) -- (1.6,0.4) -- node[above] {$F_1$} (3.65, 0.4) -- (L1);
	\draw[thick,->,F2,rounded corners=10pt] (DS2.10) -- (1.6,-0.4) -- node[below] {$F_2$} (3.65, -0.4) -- (L2);
	\draw[thick,->,F3,rounded corners=10pt] (T3) -- node[right] {$F_3$} (3.65, -0.4) -- (L2);

	\draw (0,-2.2) edge[thick,dashed] node[right,xshift=3mm] {5G links} (0.5,-2.2);
	\draw (2.5,-2.2) edge[thick] node[right,xshift=3mm] {Ethernet links} (3,-2.2);
      \end{tikzpicture}
    };

    \node[draw] (intended) at (0,0) {
      \resizebox{6cm}{!}{%
      \begin{tikzpicture}
	\draw[thick] (0,0) -- (6.5,0) node[pos=0,xshift=-1cm,yshift=0.375cm] {$[B_1,L_2]$};
	\draw[thick] (0,0.75) -- (6.5,0.75) node[pos=0,xshift=-1cm,yshift=0.375cm] {$[B_1,L_1]$};
	\draw[thick] (0,1.5) -- (6.5,1.5) node[pos=0,xshift=-1cm,yshift=0.375cm] {$[B^{NW},B_1]$};

	% B1 -> B2
	\draw[rounded corners, fill=f1] (\scale*0+\offset,1.5) rectangle ++(\scale*4,0.5) node[pos=.5] {$f_1$};
	\draw[rounded corners, fill=f2] (\scale*7+\offset,1.5) rectangle ++(\scale*4,0.5) node[pos=.5] {$f_2$};

	% B2 -> B3
	\draw[rounded corners, fill=f1] (\scale*4.5+\offset,0.75) rectangle ++(\scale*4,0.5) node[pos=.5] {$f_1$};
	% B2 -> B4
	\draw[rounded corners, fill=f3] (\scale*4.5+\offset,0) rectangle ++(\scale*4,0.5) node[pos=.5] {$f_3$};
	\draw[rounded corners, fill=f2] (\scale*11.5+\offset,0) rectangle ++(\scale*4,0.5) node[pos=.5] {$f_2$};
      \end{tikzpicture}
      }
    };

    \node[draw] (unintended) at (0,-3) {
      \resizebox{6cm}{!}{%
      \begin{tikzpicture}
	\draw[thick] (0,0) -- (6.5,0) node[pos=0,xshift=-1cm,yshift=0.375cm] {$[B_1,L_2]$};
	\draw[thick] (0,0.75) -- (6.5,0.75) node[pos=0,xshift=-1cm,yshift=0.375cm] {$[B_1,L_1]$};
	\draw[thick] (0,1.5) -- (6.5,1.5) node[pos=0,xshift=-1cm,yshift=0.375cm] {$[B^{NW},B_1]$};

	% B1 -> B2
	\draw[rounded corners,fill=f2] (\scale*0+\offset,1.5) rectangle ++(\scale*4,0.5) node[pos=.5] {$f_2$};
	\draw[rounded corners,fill=f1] (\scale*7+\offset,1.5) rectangle ++(\scale*4,0.5) node[pos=.5] {$f_1$};

	% B2 -> B3
	\draw[rounded corners] (\scale*4.5+\offset,0.75) rectangle ++(\scale*4,0.5) node[pos=.5] {empty};
	% B2 -> B4
	\draw[rounded corners, fill=f2] (\scale*4.5+\offset,0) rectangle ++(\scale*4,0.5) node[pos=.5] {$f_2$};
	\draw[rounded corners, fill=f3] (\scale*11.5+\offset,0) rectangle ++(\scale*4,0.5) node[pos=.5] {$f_3$};
      \end{tikzpicture}
      }
    };

    \node[above=1mm of intended,align=center] {Intended Transmission Ordering};
    \node[above=1mm of unintended] {Unfixable/Cascading Frame Reordering};
    \draw[-{Triangle[width=18pt,length=8pt]},black!20,line width=10pt] (-5.81,-1.5) edge[out=90,in=174] node[pos=0.75,above=1mm,black] {$f_1 \prec f_2$} (intended);
    \draw[-{Triangle[width=18pt,length=8pt]},black!20,line width=10pt] (-5.81,-1.5) edge[out=-90,in=-174] node[pos=0.75,below=1mm,black] {$f_2 \prec f_1$} (unintended);
    \draw[fill=white,white] (-7.31,-2.75) rectangle ++(3,2.5);

    \node (hist) at (-6,-1.65) {
      \resizebox{3.75cm}{!}{%
      \begin{tikzpicture}
          \begin{axis}[
              ymin=0, ymax=0.1,
              xmax=17,
              minor y tick num = 3,
              scaled y ticks=real:0.01,
              area style,
	      xlabel={arrival time at $B^{NW}$ [\si{\ms}]},
              ylabel={probability density},
              font=\Large,
              ]

	      \node (f1) at (axis cs: 3.5, 0.05) {$F_1$};
	      \draw (axis cs: 5.5,0.02) edge[out=180,in=0] (f1);

	      \node (f2) at (axis cs: 13, 0.05) {$F_2$};
	      \draw (axis cs: 10.5,0.02) edge[out=0,in=180] (f2);

              \addplot+[ybar interval,mark=no,f2,fill=f2] plot coordinates { 
        	  (6.803,1e-05) (6.906,0.00011) (7.009,0.00011) (7.112,0.00029) (7.215,0.00041) (7.318,0.00055) (7.421,0.00088) (7.524,0.00136) (7.627,0.00161) (7.73,0.00224) (7.833,0.00372) (7.936,0.00452) (8.039,0.00866) (8.142,0.01582) (8.245,0.02163) (8.348,0.03399) (8.451,0.03653) (8.554,0.0576) (8.657,0.06254) (8.76,0.08382) (8.863,0.06751) (8.966,0.03065) (9.069,0.01778) (9.172,0.0207) (9.275,0.01947) (9.378,0.02322) (9.481,0.02457) (9.584,0.02623) (9.687,0.03085) (9.79,0.03164) (9.893,0.0405) (9.996,0.03587) (10.099,0.04674) (10.202,0.03896) (10.305,0.0416) (10.408,0.03652) (10.511,0.03066) (10.614,0.03047) (10.717,0.02029) (10.82,0.01577) (10.923,0.00345) (11.026,0.00041) (11.129,0.00026) (11.232,0.00044) (11.335,0.00048) (11.438,0.00033) (11.541,0.00041) (11.644,0.00093) (11.747,0.00057) (11.85,0.00075) (11.953,0.00116) (12.056,0.00262) (12.159,0.0018) (12.262,0.00104) (12.365,0.00103) (12.468,0.00219) (12.571,0.00187) (12.674,0.00109) (12.777,0.00183) (12.88,0.00148) (12.983,0.00039) (13.086,0.00017) (13.189,0.0002) (13.292,0.0002) (13.395,0.00028) (13.498,0.00024) (13.601,0.0003) (13.704,0.00171) (13.807,0.00045) (13.91,0.00027) (14.013,0.00037) (14.116,0.0005) (14.219,0.00033) (14.322,0.00046) (14.425,0.0007) (14.528,0.00158) (14.631,0.00057) (14.734,0.00023) (14.837,0.0002) (14.94,5e-05) (15.043,1e-05) (15.146,0.0) (15.249,1e-05) (15.352,1e-05) (15.455,0.0) (15.558,3e-05) (15.661,6e-05) (15.764,1e-05) (15.867,1e-05) (15.97,1e-05) (16.073,1e-05) (16.176,1e-05) (16.279,0.0) (16.382,0.0) (16.485,4e-05) (16.588,0.0) (16.691,2e-05) (16.794,0.0) (16.897,2e-05) (17.0,0.0) 
              };
              \addplot+[ybar interval,mark=no,f1,fill=f1] plot coordinates { 
        	  (3.803,1e-05) (3.906,0.00011) (4.009,0.00011) (4.112,0.00029) (4.215,0.00041) (4.318,0.00055) (4.421,0.00088) (4.524,0.00136) (4.627,0.00161) (4.73,0.00224) (4.833,0.00372) (4.936,0.00452) (5.039,0.00866) (5.142,0.01582) (5.245,0.02163) (5.348,0.03399) (5.451,0.03653) (5.554,0.0576) (5.657,0.06254) (5.76,0.08382) (5.863,0.06751) (5.966,0.03065) (6.069,0.01778) (6.172,0.0207) (6.275,0.01947) (6.378,0.02322) (6.481,0.02457) (6.584,0.02623) (6.687,0.03085) (6.79,0.03164) (6.893,0.0405) (6.996,0.03587) (7.099,0.04674) (7.202,0.03896) (7.305,0.0416) (7.408,0.03652) (7.511,0.03066) (7.614,0.03047) (7.717,0.02029) (7.82,0.01577) (7.923,0.00345) (8.026,0.00041) (8.129,0.00026) (8.232,0.00044) (8.335,0.00048) (8.438,0.00033) (8.541,0.00041) (8.644,0.00093) (8.747,0.00057) (8.85,0.00075) (8.953,0.00116) (9.056,0.00262) (9.159,0.0018) (9.262,0.00104) (9.365,0.00103) (9.468,0.00219) (9.571,0.00187) (9.674,0.00109) (9.777,0.00183) (9.88,0.00148) (9.983,0.00039) (10.086,0.00017) (10.189,0.0002) (10.292,0.0002) (10.395,0.00028) (10.498,0.00024) (10.601,0.0003) (10.704,0.00171) (10.807,0.00045) (10.91,0.00027) (11.013,0.00037) (11.116,0.0005) (11.219,0.00033) (11.322,0.00046) (11.425,0.0007) (11.528,0.00158) (11.631,0.00057) (11.734,0.00023) (11.837,0.0002) (11.94,5e-05) (12.043,1e-05) (12.146,0.0) (12.249,1e-05) (12.352,1e-05) (12.455,0.0) (12.558,3e-05) (12.661,6e-05) (12.764,1e-05) (12.867,1e-05) (12.97,1e-05) (13.073,1e-05) (13.176,1e-05) (13.279,0.0) (13.382,0.0) (13.485,4e-05) (13.588,0.0) (13.691,2e-05) (13.794,0.0) (13.897,2e-05) (14.0,0.0) 
              };

          \end{axis}
      \end{tikzpicture}
      }
    };
  \end{tikzpicture}
  }
  \caption{Frame reordering in a TSN configuration for time-triggered streams can nullify any end-to-end QoS guarantee.}\label{fig:problem}
\end{figure*}
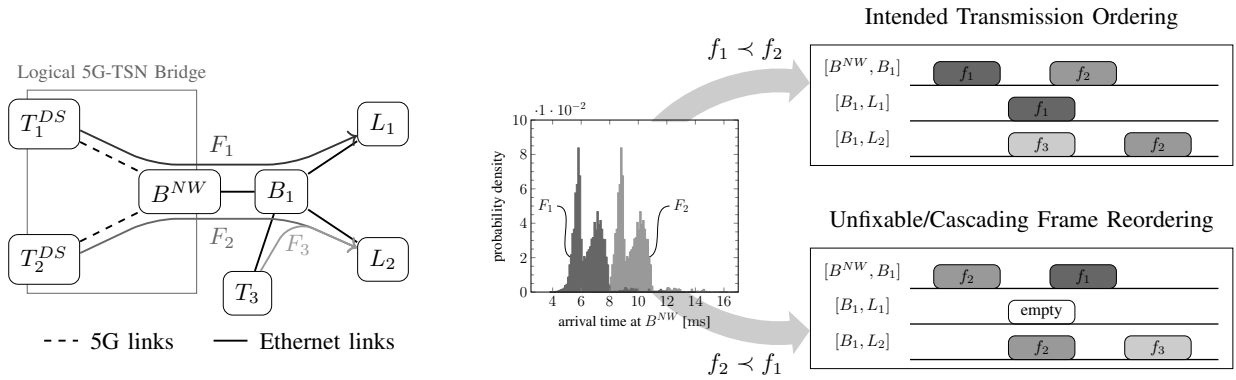

\subsection{System Model and Notation}
We model the network as a directed graph $G = (V, E)$.
A vertex $u \in V$ represents a network entity visible from the TSN controller (i.e., end-devices, wired TSN bridges, and TSN translators), whereas an edge $[u,v] \in E$ (i.e., $u,v \in V$) specifies an Ethernet or 5G link between two adjacent network entities.
Note that, in the case of Fig.~\ref{fig:5g_tsn_bridge}, we do not use a single vertex to model the logical 5G-TSN bridge;
instead, we use three vertices for the two DS-TTs and the NW-TT.
We argue that this way of modelling makes our results also applicable for other wireless technologies like IEEE 802.11.

We denote the set of (time-triggered) streams by $\TT$.
Each stream $F \in \TT$ defines its path $\path{F}$, period $\period{F}$, phase $\phase{F}$, and frame size $\size{F}$.
For this work, we assume unicast paths of the form $(\v{1}{F}, \ldots, \vl{F})$, where $\v{1}{F}$ is the talker and $\vl{F}$ is the listener. 
The hypercycle $H$ is defined by $\text{lcm}_{F \in \TT}(\period{F})$.
To find a feasible TSN configuration, the scheduler has to incorporate $H / \period{F}$ many frames of the stream $F$.
We denote frames of $F$ by $f \in F$, and the $i$th frame in a hypercycle is released at time
\begin{equation*}
  \release{f} = \phase{F} + i \times \period{F}.
\end{equation*}

\begin{definition}[End-to-End QoS] \label{def:reliability}
  Each stream $F \in \TT$ specifies its minimum end-to-end reliability $\rel{F}$ w.r.t. latency $\ete{F}$ and jitter $\jitter{F}$.
The TSN configuration must ensure, with a probability of at least $\rel{F}$, that each frame $f \in F$ arrives at its listener $\vl{F}$ within a predetermined arrival interval $\R{\vl{F}, f}$, as constrained by
\begin{align}
  \Rmax{\vl{F}, f} - \release{f} &\leq \ete{F} \qquad \text{and} \label{eq:latency} \\
  \sub{\Rmax}{\Rmin}{\vl{F}, f} &\leq \jitter{F}. \label{eq:jitter}
\end{align}
\end{definition}

This work aims to compute feasible TSN configurations $\Conf = (\GCL, \R{})$ that satisfy Definition~\ref{def:reliability}.
$\Conf$ defines the gating operations $\GCL([u,v])$ of the Time-Aware Shaper at each egress port $[u,v]$ and the allowed arrival interval $\R{\v{k}{F}, f}$ of PSFP for each frame $f \in F$ at each hop $\v{k}{F} \in \path{F}$.

\subsection{Modelling TSN Runtime Behavior} \label{sec:execution_sequences}
To analyze the QoS guarantees of a TSN configuration $\Conf$, we introduce a model that captures possible runtime behavior under $\Conf$.
To this end, we define execution sequences $\E = (\T{}, \D{})$ to capture the possible transmission offsets $\T{}$ and transmission delays $\D{}$ for each frame $f \in \TT$ and each hop $[u,v] \in \path{f}$.
In detail, $\E$ encodes the following semantic:
\begin{itemize}
  \item $\T{[u,v], f}$ denotes the time when the bridge $u$ starts the transmission of $f$ at the egress port $[u,v]$, and
  \item $\D{[u,v], f}$ denotes the delay until $f$ is enqueued at $v$.
\end{itemize}
For Ethernet links, $\D{[u,v], f}$ equals the sum of the serialization delay $\size{f} / \drate{[u,v]}$, the propagation delay $\dprop{[u,v]}$, and the processing delay $\dproc{v}$ of $v$.
For 5G links, $\D{[u,v], f}$ is the outcome of the random variable capturing the 5G port-to-port delay, as described in Section~\ref{sec:5g_tsn}.
In both cases, the arrival time of frame $f$ at bridge $v$ equals
\begin{equation*}
  \add{\T}{\D}{[u,v], f} = \T{[u,v], f} + \D{[u,v], f}.
\end{equation*}
Moreover, $\E$ is constrained to satisfy the following:

\textbf{FIFO Queueing:}
The transmission of a frame $f$ via $[u,v]$ can only start after $f$ arrived at $u$. 
If another frame $f'$ is enqueued in the same FIFO queue at $[u,v]$ before $f$, then $u$ can only start the transmission of $f$ once that of $f'$ is completed.

\textbf{Multiplexing:}
Only a single frame can be transmitted over an Ethernet link at once.
In contrast, frequency-division multiplexing allows multiple frames to be sent over 5G links in parallel.
As noted in Section~\ref{sec:5g_tsn}, we assume that the induced traffic load is captured by the 5G packet delay histograms.

\textbf{GCL Consistency:}
A frame can be transmitted via $[u,v]$ if the gate configured by $\GCL([u,v])$ is open.
At the same time, the first frame of the egress queue must immediately start its transmission once the gate opens.

\textbf{Stream Policing:}
A frame $f$ is discarded by PSFP at bridge $u$ if and only if $f$ arrives at $u$ outside the interval $\R{u, f}$.

\section{Problem Description} \label{sec:problem_description}
The integration of the 5G system in TSN as a logical 5G-TSN bridge enables IEEE 802.1Qbv scheduling in networks with wired and wireless network elements.
However, logical 5G-TSN bridges and wired TSN bridges differ significantly in their packet delay characteristics (see Fig.~\ref{fig:delays}).
Next, we show how this difference can effectively nullify the end-to-end QoS of streams and how it can induce a central bottleneck in scalability.
In particular, we argue that current IEEE 802.1Qbv scheduling approaches fall into the following two categories.

\textbf{Non-Robust Approaches}
are unable to capture 5G packet delay variations.
This can lead to unintended frame reordering and cascading QoS violations that spread throughout the entire network.
Most prominently, this category includes scheduling algorithms that consider scalar values for the 5G packet delays \ldash e.g., the average, median, or maximum delay of a 5G packet delay histogram (e.g., \cite{9212049, 9940254}).

This problem is illustrated in Fig.~\ref{fig:problem}, where three streams $F_1, F_2$, and $F_3$ (of the same priority) traverse the network as shown on the left.
The talkers $T_1^{DS}$ and $T_2^{DS}$ start the transmission of the frames $f_1 \in F_1$ and $f_2 \in F_2$ at times $\qty{0}{\ms}$ and $\qty{3}{\ms}$, respectively. 
With the 5G packet delay histograms of Fig.~\ref{fig:delays}(b), $f_1$ and $f_2$ can arrive at the NW-TT $B^{NW}$ within the intervals $[\qty{4}{\ms}, \qty{14}{\ms}]$ and $[\qty{7}{\ms}, \qty{17}{\ms}]$.

Even considering the latest possible arrival time ($\qty{14}{\ms}$ and $\qty{17}{\ms}$) will result in a schedule that breaks down irreparably after a short period of operation.
This is caused by the overlapping arrival intervals at $B^{NW}$, where a seemingly small probability of $f_2$ arriving before $f_1$ (denoted by $f_2 \prec f_1$ in Fig.~\ref{fig:problem}) has the following consequences:
\begin{enumerate}[label=(\roman*)]
  \item $f_1$ is forwarded later than intended and also misses its subsequent transmission slot at $[B_1, L_1]$,
  \item $f_2$ is forwarded earlier than intended and can steal the transmission slot of $f_3$ at $[B_1, L_2]$.
\end{enumerate}
Thus, even in such simplified settings, a single frame reordering has cascading effects that spread across multiple streams.
Our evaluations in Section~\ref{sec:eval} show that these effects cause a $\qty{99.99}{\percent}$ reliability in the inner 5G network to plummet below $\qty{10}{\percent}$ when looking at the end-to-end communication in TSN.

\textbf{Strict Transmission Isolation (STI).}
An apparent fix to the problems of non-robust approaches would be to allow no overlap in the arrival intervals of $f_1$ and $f_2$ at $B^{NW}$.
STI is often employed implicitly for IEEE 802.1Qbv scheduling in wired networks~\cite{nwps, Craciunas2016RTNS}.
However, compared to the small delay variations in wired networks, 5G packet delay variations are three orders of magnitude larger (cf. Fig.~\ref{fig:delays}).
Thus, STI would defer the initial transmission of $f_2$ by an additional $\qty{7}{\ms}$ to avoid an overlapping arrival interval with $f_1$ at $B^{NW}$.
This already prohibits $f_2$ from being accepted if its end-to-end latency must be below $\qty{20}{\ms}$.
Employing STI would therefore introduce a major scalability bottleneck in the number of schedulable wireless streams.

In this work, we aim to devise an IEEE 802.1Qbv scheduling technique that provides formal end-to-end QoS guarantees for each TSN stream (according to Definition~\ref{def:reliability}) without suffering the same scalability bottleneck as STI.
To quantify the scalability issue, we consider the number of schedulable TSN streams as our optimization objective.

\section{Robust and Feasible TSN Configurations} \label{sec:robustness_and_feasibility}
The previous section showed that highly stochastic 5G packet delays can nullify any QoS guarantee of non-robust approaches.
The underlying problem is rooted in the disconnect between the intended transmission ordering of the GCL and the actual transmission behavior at runtime. 
To overcome this obstacle, this section proposes a notion of robust scheduling that account for bounded packet delay uncertainty. 

\subsection{5G Packet Delay Budgets} \label{sec:robustness:pdb}
Each TSN stream $F$ specifies QoS requirements in terms of latency $\ete{F}$, jitter $\jitter{F}$, and reliability $\rel{F}$.
Due to significant 5G packet delays and packet delay variations, each of these metrics is strongly affected for wireless streams.
We therefore start by defining the allowed 5G packet delay budget~(PDB) that the TSN configuration has to account for.
For simplicity, we assume that each wireless stream traverses at most one logical 5G-TSN bridge and that there are no transmission faults in wired network partitions.\footnote{
  An extension to multiple wireless hops is straightforward by using (\ref{eq:pdb2}) multiplicatively for each 5G link.
  To address wired link/bridge failures, however, complementary techniques like frame replication are more suitable.
}

\subsubsection{Design Rationale}
A 5G PDB $\dpdb{} = [\dpdbmin{}, \dpdbmax{}]$ lower- and upper-bounds the allowed packet delay for traversing the logical 5G-TSN bridge. 
When regarding $\ete{F}$ as the end-to-end delay budget of $F$, the key question is: how much of that budget should be allocated to its 5G PDB $\dpdb{F}$?
We capture the trade-off between choosing $\dpdb{F}$ too narrow or too large by formulating the following optimization problem:
\begin{subequations} \label{eq:pdb}
\begin{align}
  \min_{\dpdb{}} \quad &\dpdbmax{F}, \label{eq:pdb1}\\
  \text{s.t.} \quad &\Prob\Big(\D{F} \in \dpdbfull{F}\Big) \geq \rel{F}. \label{eq:pdb2}
\end{align}
\end{subequations}
Here, $\D{F}$ denotes the random variable of 5G delays experienced by frames of stream $F$, and $\Prob(\cdot)$ denotes the corresponding probability function.
In the following, we describe the rationale behind the choice of (\ref{eq:pdb1}) and (\ref{eq:pdb2}).

\begin{figure}[t]
  \begin{subfigure}{0.48\columnwidth}
    \centering
    \resizebox{\textwidth}{!}{%
    \begin{tikzpicture}
	\begin{axis}[
	    ymin=0, ymax=0.1,
	    xmax=15,
	    minor y tick num = 3,
	    scaled y ticks=real:0.01,
	    area style,
	    xlabel={uplink packet delay [\si{\ms}]},
	    ylabel={probability density},
	    font=\Large,
	    ]
	    \path[name path=lower9999] (axis cs:3.803,0) -- (axis cs:3.803,1);
	    \path[name path=upper9999] (axis cs:13.176,0) -- (axis cs:13.176,1);
	    \addplot [black!10] fill between[of = lower9999 and upper9999];
	    \node[rotate=90,anchor=south] at (axis cs:13.176,0.08) {$\qty{99.99}{\percent}$};

	    \path[name path=lower99] (axis cs:3.803,0) -- (axis cs:3.803,1);
	    \path[name path=upper99] (axis cs:9.983,0) -- (axis cs:9.983,1);
	    \addplot [black!20] fill between[of = lower99 and upper99];
	    \node[rotate=90,anchor=south] at (axis cs:9.981,0.08) {$\qty{99}{\percent}$};

	    \path[name path=lower90] (axis cs:3.803,0) -- (axis cs:3.803,1);
	    \path[name path=upper90] (axis cs:7.717,0) -- (axis cs:7.717,1);
	    \addplot [black!30] fill between[of = lower90 and upper90];
	    \node[rotate=90,anchor=south] at (axis cs:7.717,0.08) {$\qty{90}{\percent}$};

	    \addplot+[ybar interval,mark=no,black,fill=black,semitransparent] plot coordinates { 
		(3.803,1e-05) (3.906,0.00011) (4.009,0.00011) (4.112,0.00029) (4.215,0.00041) (4.318,0.00055) (4.421,0.00088) (4.524,0.00136) (4.627,0.00161) (4.73,0.00224) (4.833,0.00372) (4.936,0.00452) (5.039,0.00866) (5.142,0.01582) (5.245,0.02163) (5.348,0.03399) (5.451,0.03653) (5.554,0.0576) (5.657,0.06254) (5.76,0.08382) (5.863,0.06751) (5.966,0.03065) (6.069,0.01778) (6.172,0.0207) (6.275,0.01947) (6.378,0.02322) (6.481,0.02457) (6.584,0.02623) (6.687,0.03085) (6.79,0.03164) (6.893,0.0405) (6.996,0.03587) (7.099,0.04674) (7.202,0.03896) (7.305,0.0416) (7.408,0.03652) (7.511,0.03066) (7.614,0.03047) (7.717,0.02029) (7.82,0.01577) (7.923,0.00345) (8.026,0.00041) (8.129,0.00026) (8.232,0.00044) (8.335,0.00048) (8.438,0.00033) (8.541,0.00041) (8.644,0.00093) (8.747,0.00057) (8.85,0.00075) (8.953,0.00116) (9.056,0.00262) (9.159,0.0018) (9.262,0.00104) (9.365,0.00103) (9.468,0.00219) (9.571,0.00187) (9.674,0.00109) (9.777,0.00183) (9.88,0.00148) (9.983,0.00039) (10.086,0.00017) (10.189,0.0002) (10.292,0.0002) (10.395,0.00028) (10.498,0.00024) (10.601,0.0003) (10.704,0.00171) (10.807,0.00045) (10.91,0.00027) (11.013,0.00037) (11.116,0.0005) (11.219,0.00033) (11.322,0.00046) (11.425,0.0007) (11.528,0.00158) (11.631,0.00057) (11.734,0.00023) (11.837,0.0002) (11.94,5e-05) (12.043,1e-05) (12.146,0.0) (12.249,1e-05) (12.352,1e-05) (12.455,0.0) (12.558,3e-05) (12.661,6e-05) (12.764,1e-05) (12.867,1e-05) (12.97,1e-05) (13.073,1e-05) (13.176,1e-05) (13.279,0.0) (13.382,0.0) (13.485,4e-05) (13.588,0.0) (13.691,2e-05) (13.794,0.0) (13.897,2e-05) (14.0,0.0) 
	    };

	    \draw (axis cs: 3.803,0) edge[thick] node[fill=white, inner sep=1pt] {min} (axis cs: 3.803,0.1);
	    \draw (axis cs: 14,0) edge[thick] node[fill=white, inner sep=1pt] {max} (axis cs: 14,0.1);
	\end{axis}
    \end{tikzpicture}
    }
  \end{subfigure}
  \begin{subfigure}{0.48\columnwidth}
    \centering
    %\resizebox{\textwidth}{!}{%
    \begin{tikzpicture}
      \def\offset{0.1}
      \def\scale{0.35}
      \def\steps{2}

      \draw (0,0) edge[thick,->] node[pos=0,below,font=\scriptsize,anchor=north west,xshift=-2pt] {DS-TT} (3.5,0);
      \draw (0,-1) edge[thick,->] node[pos=0,below,font=\scriptsize,anchor=north west,xshift=-2pt] {NW-TT (PSFP)} (3.5,-1);
      \draw (0,-2) edge[thick,->] node[pos=0,below,font=\scriptsize,anchor=north west,xshift=-2pt] {NW-TT (egress queue)} (3.5,-2);

      \draw[rounded corners=1mm] (\scale*0+\offset,0) rectangle ++(\scale*1,0.4) node[pos=0.5,font=\scriptsize] {$f$};
      \draw (\scale*1+\offset,0.5) edge[thick,-{Latex[length=1.5mm]}] node[above=2mm,font=\scriptsize] {$tx$} (\scale*1+\offset,0);

      \draw [pattern=north east lines] (0,-1) rectangle ++(\scale*4+\offset,0.5);
      \draw (\scale*4+\offset,0.5) edge[dashed] node[above=9mm,xshift=7pt,font=\scriptsize,rotate=30] {$tx + \dpdbmin{}$} (\scale*4+\offset,-1);
      \draw (\scale*4+\offset,-0.5) edge[thick] (\scale*4+\offset,-1);
      \draw (\scale*8+\offset,0.5) edge[dashed] node[above=9mm,xshift=7pt,font=\scriptsize,rotate=30] {$tx + \dpdbmax{}$} (\scale*8+\offset,-1);
      \draw (\scale*8+\offset,-0.5) edge[thick] (\scale*8+\offset,-1);
      \draw [pattern=north east lines] (\scale*8+\offset,-1) rectangle ++(\scale*0.35+\offset,0.5) node[right,pos=0.5,xshift=2pt,font=\scriptsize] {$\cdots$};

      \draw[rounded corners=1mm] (\scale*5+\offset,-2) rectangle ++(\scale*3.5,0.4) node[pos=0.5,font=\scriptsize] {$f$};
      \draw[black!40] (\scale*1+\offset,0) -- (\scale*5+\offset,-0.5) -- (\scale*5+\offset,-2);
      \draw (\scale*5+\offset,-0.5) edge[thick,-{Latex[length=1.5mm]}] node[above=2mm,font=\scriptsize] {$rx$} (\scale*5+\offset,-1);
      \draw (\scale*8.5+\offset,-1.5) edge[thick,-{Latex[length=1.5mm]}] node[above=2mm,font=\scriptsize] {$tx$} (\scale*8.5+\offset,-2);
      
      \node at (0,-2.5) {};
    \end{tikzpicture}
    %}
  \end{subfigure}
  \caption{5G Packet Delay Budgets $[\mathbf{d}^{min}, \mathbf{d}^{max}]$ from a 5G perspective (left) and from a TSN perspective (right).} \label{fig:pdb}
\end{figure}
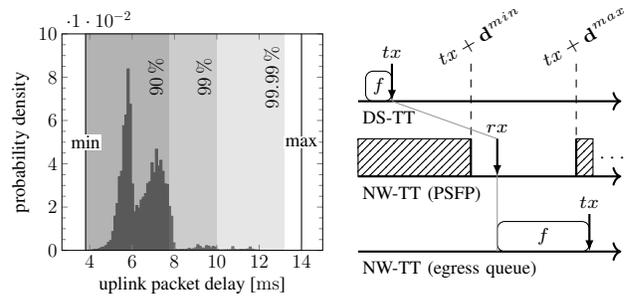

First, allocating a too narrow 5G PDB, either in its upper bound $\dpdbmax{}$ or its size $\dpdbmax{} - \dpdbmin{}$, results in a degradation of the stream's end-to-end reliability and increases the difficulty for the 5G system to allocate sufficient resources.
%Moreover, to shield other (high-priority) traffic, TSN bridges and TSN translators can be instructed with PSFP to discard frames that arrive outside their expected arrival interval.
Moreover, for the uplink stream $F$ in Fig.~\ref{fig:pdb}, the expected arrival of a frame $f \in F$ at the NW-TT is given by
\begin{equation*}
  [tx + \dpdbmin{F}, tx + \dpdbmax{F}],
\end{equation*}
where $tx$ is the transmission offset of $f$ at the DS-TT.
As any arrival of $f$ outside this interval will likely result in uncontrolled frame reordering, PSFP at the NW-TT must discard $f$ in such cases.
Therefore, constraint~(\ref{eq:pdb2}) ensures that the 5G PDB of $F$ is large enough so that the probability of frames $f \in F$ passing PSFP is greater than $\rel{F}$.

Second, allocating a too large 5G PDB increases the difficulty for the IEEE 802.1Qbv scheduler to deliver frames within the allowed end-to-end latency.
For instance, Fig.~\ref{fig:pdb} shows how the PDB for a stream $F$ with $\rel{F} = \qty{90}{\percent}$ can cut off the long tail of the packet delay histogram, reducing the end-to-end latency by over $\qty{6}{\ms}$.
With the objective function~(\ref{eq:pdb1}), we therefore opt to minimize $\dpdbmax{F}$.

\subsubsection{5G PDB Contracts}
With the port-to-port delay histograms reported by the 5G system, the CNC can compute the 5G PDB of $F$ by accumulating the bin counts until $\rel{F}$ is exceeded. 
In detail, we can solve (\ref{eq:pdb}) optimally with
\begin{align*}
  \dpdbmin{F} &= \text{hist}[0].\text{low}, \\
  \dpdbmax{F} &= \min\Big\{\text{hist}[i].\text{up} \mid \sum\nolimits_{j=0}^{i} \text{hist}[j].\text{count} \geq \rel{F}\Big\},
\end{align*}
where $\text{hist}[i]$ denotes the $i$th bin of the histogram, with its respective lower/upper bound and (normalized) bin count.

The 5G PDB of a stream $F$ yields the following contract between the 5G system and the TSN system:
``The delay of a frame $f \in F$ traversing the 5G system is bounded by $\dpdb{F}$, with a probability greater than $\rel{F}$.''
The IEEE 802.1Qbv scheduler solely requires this contract to compute the GCLs.
That is, the CNC does not require additional state information of the 5G system (e.g., the current channel quality indicator) and it does not have to reschedule the entire TSN domain with every change in the 5G network state (e.g., which can occur every few milliseconds).
Instead, the 5G system can internally adapt its resource allocation to uphold the PDB contract.

\subsection{Robust and Feasible TSN Configurations} \label{sec:robustness}
With 5G PDBs capturing the guarantees of the 5G system, TSN has to extend these guarantees end-to-end from talker to listener.
This section introduces the notion of robustness that guarantees the frame's punctual delivery to its TSN listener as long as the 5G system complies with the 5G PDB contracts.

\begin{figure}
  \centering
  \tikzset{
  diagonal fill/.style 2 args={fill=#2, path picture={
  \fill[#1, sharp corners] (path picture bounding box.south west) -|
			   (path picture bounding box.north east) -- cycle;}},
  reversed diagonal fill/.style 2 args={fill=#2, path picture={
  \fill[#1, sharp corners] (path picture bounding box.north west) |- 
			   (path picture bounding box.south east) -- cycle;}}
  }
  \begin{tikzpicture}
      \def\offset{0.1}
      \def\scale{0.35}
      \def\steps{2}
      \def\y{-0.75}

      \draw (0,\y*0) edge[thick,->] node[pos=0,left,font=\scriptsize,anchor=east,xshift=-2pt,yshift=8pt] {$[T^{DS}_1, B^{NW}]$} (7,\y*0);
      \draw (0,\y*1) edge[thick,->] node[pos=0,left,font=\scriptsize,anchor=east,xshift=-2pt,yshift=8pt] {$[T^{DS}_2, B^{NW}]$} (7,\y*1);
      \draw (0,\y*2) edge[thick,->] node[pos=0,below,font=\scriptsize,anchor=east,xshift=-2pt,yshift=8pt] {$B^{NW}$ (PSFP)} (7,\y*2);
      \draw (0,\y*3) edge[thick,->] node[pos=0,below,font=\scriptsize,anchor=east,xshift=-2pt,yshift=8pt] {$[B^{NW}, B_1]$} (7,\y*3);
      \draw (0,\y*4) edge[thick,->] node[pos=0,below,font=\scriptsize,anchor=east,xshift=-2pt,yshift=8pt] {$[T_3, B_1]$} (7,\y*4);
      \draw (0,\y*5) edge[thick,->] node[pos=0,below,font=\scriptsize,anchor=east,xshift=-2pt,yshift=8pt] {$[B_1, L_1]$} (7,\y*5);
      \draw (0,\y*6) edge[thick,->] node[pos=0,below,font=\scriptsize,anchor=east,xshift=-2pt,yshift=8pt] {$[B_1, L_2]$} (7,\y*6);

      % T_1 -> B^NW
      \draw[rounded corners=1mm, fill=f1] (\scale*0+\offset,\y*0) rectangle ++(\scale*1.5,0.4) node[pos=0.5,font=\scriptsize] {$f_1$};
      \draw (\scale*1.5+\offset,\y*0+0.5) edge[thick,-{Latex[length=1.5mm]}] node[above=1.5mm,font=\scriptsize] {$tx$} (\scale*1.5+\offset,\y*0);

      % T_2 -> B^NW
      \draw[rounded corners=1mm, fill=f2] (\scale*6.5+\offset,\y*1) rectangle ++(\scale*1.5,0.4) node[pos=0.5,font=\scriptsize] {$f_2$};
      \draw (\scale*8+\offset,\y*1+0.5) edge[thick,-{Latex[length=1.5mm]}] node[above=1.5mm,font=\scriptsize] {$tx$} (\scale*8+\offset,\y*1);

      % B^NW PSFP
      \draw[pattern=north east lines] (0,\y*2) rectangle ++(\scale*4+\offset,0.5);
      \draw[fill=f1] (\scale*4+\offset,\y*2) rectangle ++(\scale*4,0.5);  
      \draw[pattern=north east lines] (\scale*8+\offset,\y*2) rectangle ++(\scale*2.5+\offset,0.5);
      \draw[fill=f2] (\scale*10.5+\offset,\y*2) rectangle ++(\scale*4,0.5);  
      \draw[pattern=north east lines] (\scale*14.5+\offset,\y*2) rectangle ++(\scale*4+\offset,0.5);

      % B^NW -> B_1
      \draw[rounded corners=1mm, fill=f1] (\scale*4+\offset,\y*3) rectangle ++(\scale*4.5,0.4) node[pos=0.5,font=\scriptsize] {$f_1$};
      \draw (\scale*8.5+\offset,\y*3+0.5) edge[thick,-{Latex[length=1.5mm]}] node[above=1.5mm,font=\scriptsize] {$tx$} (\scale*8.5+\offset,\y*3);
      \draw[rounded corners=1mm, fill=f2] (\scale*10.5+\offset,\y*3) rectangle ++(\scale*4.5,0.4) node[pos=0.5,font=\scriptsize] {$f_2$};
      \draw (\scale*15+\offset,\y*3+0.5) edge[thick,-{Latex[length=1.5mm]}] node[above=1.5mm,font=\scriptsize] {$tx$} (\scale*15+\offset,\y*3);

      % T_3 -> B_1
      \draw[rounded corners=1mm, fill=f3] (\scale*7+\offset,\y*4) rectangle ++(\scale*1.5,0.4) node[pos=0.5,font=\scriptsize] {$f_3$};
      \draw (\scale*8.5+\offset,\y*4+0.5) edge[thick,-{Latex[length=1.5mm]}] node[above=1.5mm,font=\scriptsize] {$tx$} (\scale*8.5+\offset,\y*4);

      % B_1 -> L_1
      \draw[rounded corners=1mm, fill=f1] (\scale*8.5+\offset,\y*5) rectangle ++(\scale*1.5,0.4) node[pos=0.5,font=\scriptsize] {$f_1$};
      \draw (\scale*10+\offset,\y*5+0.5) edge[thick,-{Latex[length=1.5mm]}] node[above=1.5mm,font=\scriptsize] {$tx$} (\scale*10+\offset,\y*5);

      % B_1 -> L_2
      \draw[rounded corners=1mm, fill=f3] (\scale*8.5+\offset,\y*6) rectangle ++(\scale*1.5,0.4) node[pos=0.5,font=\scriptsize] {$f_3$};
      \draw (\scale*10+\offset,\y*6+0.5) edge[thick,-{Latex[length=1.5mm]}] node[above=1.5mm,font=\scriptsize] {$tx$} (\scale*10+\offset,\y*6);
      \draw[rounded corners=1mm, fill=f2] (\scale*15+\offset,\y*6) rectangle ++(\scale*1.5,0.4) node[pos=0.5,font=\scriptsize] {$f_2$};
      \draw (\scale*16.5+\offset,\y*6+0.5) edge[thick,-{Latex[length=1.5mm]}] node[above=1.5mm,font=\scriptsize] {$tx$} (\scale*16.5+\offset,\y*6);
  \end{tikzpicture}
  \caption{Strict temporal isolation of all transmissions yields a robust (yet inefficient) TSN configuration.
    The streams traverse the network as in Fig.~\ref{fig:problem}.
    The $tx$-annotated arrows show the frames' transmission start (i.e., the gate openings), and the leading boxes show their maximum queueing delay.
  }\label{fig:strict_isolation}
\end{figure}

Before formalizing robustness in Definition~\ref{def:robustness}, we describe the underlying idea with Fig.~\ref{fig:strict_isolation}.
To protect the QoS of $F_2$, the correct delivery of $f_2 \in F_2$ must be independent of potential transmission faults of $f_1$ and $f_3$.
Robustness must therefore ensure, for every execution sequence $\E$, that $f_2$ can only take its intended transmission slots, i.e., the second gate opening at $[B^{NW}, B_1]$ and at $[B_1,L_2]$.
This assurance is twofold: 
First, there must not be a queueing backlog that defers $f_2$.
Second, $f_2$ may never take an earlier transmission slot, even if the previous slot is empty (e.g., when $f_1$ violates its 5G PDB).

At the same time, robustness should be considered as a conditional guarantee that builds on the 5G PDB contracts of Section~\ref{sec:robustness:pdb}.
For instance, $f_2$ can only take its intended transmission slots if the 5G packet delay of $f_2$ stays within its PDB $\dpdb{[T_2^{DS}, B^{NW}], F_2}$;
otherwise, $B^{NW}$ may have to discard $f_2$ to protect other streams.
We extend this idea inductively in the number of transmission hops:
$f_2$ arrives at its $k$th hop within the expected arrival interval if all individual packet delays up to that point lie within their respective PDBs.\footnote{To unify notation, we extend the notion of 5G PDBs to Ethernet links, e.g., to $\dpdb{[B_1, L_2], F_2}$, which may capture small delay variations as in Fig.~\ref{fig:delays}(a).}
This leads to the following definition of robustness:

\begin{definition}[Robustness] \label{def:robustness}
  The TSN configuration $\Conf = (\GCL, \R{})$ robustly schedules a stream $F \in \TT$ if for every execution sequence $\E = (\T{}, \D{})$, every frame $f \in F$, and every hop $\v{k}{F} \in \path{F}$ the following holds true:
  If, up to bridge~$\v{k}{F}$, the packet delays lie within their PDBs, i.e.,
  \begin{equation*} \label{eq:robustness:precondition}
    \D{[\v{l}{F}, \v{l+1}{F}], f} \in \dpdb{[\v{l}{F}, \v{l+1}{F}], F}, \quad \forall 1 \leq l < k,
  \end{equation*}
  then $\v{k}{F}$ receives $f$ within its expected PSFP interval, i.e.,
  \begin{equation*} \label{eq:robustness:postcondition}
    \add{\T}{\D}{[\v{k-1}{F}, \v{k}{F}], f} \in \R{\v{k}{F}, f}.
  \end{equation*}
\end{definition}

Finally, we summarize the required properties of a TSN configuration $\Conf$ to achieve formal end-to-end QoS guarantees:
\begin{theorem}[Feasibility] \label{theorem:feasibility}
  A TSN configuration $\Conf$ feasibly schedules a stream $F \in \TT$ (according to Definition~\ref{def:reliability}) if
  \begin{enumerate}[label=(\roman*)]
    \item $\Conf$ robustly schedules $F$,
    \item $\Conf$ allocates sufficiently large PDBs, according to (\ref{eq:pdb}), and
    \item $\Conf$ satisfies the latency~(\ref{eq:latency}) and jitter~(\ref{eq:jitter}) bounds of $F$.
  \end{enumerate}
\end{theorem}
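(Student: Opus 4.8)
The plan is to split Definition~\ref{def:reliability} into its deterministic and its probabilistic part. The latency bound~(\ref{eq:latency}) and the jitter bound~(\ref{eq:jitter}) are discharged verbatim by assumption~(iii), since both are phrased directly in terms of the arrival interval $\R{\vl{F}, f}$ that the configuration $\Conf$ assigns to $F$ at its listener. What then remains is the probabilistic statement: with probability at least $\rel{F}$, every frame $f \in F$ reaches $\vl{F}$ within that interval (and, in particular, is never discarded by PSFP along $\path{F}$).

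For the probabilistic part I would fix an arbitrary frame $f \in F$ and consider the event $A_f$ that all of $f$'s hop delays stay inside their PDBs, i.e. $\D{[\v{l}{F},\v{l+1}{F}],f} \in \dpdb{[\v{l}{F},\v{l+1}{F}],F}$ for every $1 \le l < n(F)$. The first step is the set inclusion $A_f \subseteq \{\, f \text{ arrives at } \vl{F} \text{ within } \R{\vl{F},f}\,\}$. On $A_f$ the precondition of robustness (Definition~\ref{def:robustness}) holds at \emph{every} hop $\v{k}{F}$ with $2 \le k \le n(F)$ simultaneously; assumption~(i) then yields $\add{\T}{\D}{[\v{k-1}{F},\v{k}{F}],f} \in \R{\v{k}{F},f}$ at each such hop. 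By the Stream Policing rule of Section~\ref{sec:execution_sequences}, $f$ is therefore never dropped by PSFP and arrives at the final hop $\vl{F}$ inside $\R{\vl{F},f}$, which is the claimed inclusion. Hence $\Prob(f \text{ arrives within } \R{\vl{F},f}) \ge \Prob(A_f)$, and it suffices to prove $\Prob(A_f) \ge \rel{F}$.

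The second step bounds $\Prob(A_f)$ via assumption~(ii). Under the standing assumption of Section~\ref{sec:robustness:pdb}, $\path{F}$ contains at most one 5G link, say $[u^\star,v^\star]$, and every other link is an Ethernet link without transmission faults. For an Ethernet link $[u,v]$, the delay $\D{[u,v],f}$ is the bounded sum of serialization, propagation, and processing delays, and the Ethernet PDB $\dpdb{[u,v],F}$ is by construction wide enough to contain every such value, so $\Prob(\D{[u,v],f} \in \dpdb{[u,v],F}) = 1$. For the 5G link, assumption~(ii) is precisely~(\ref{eq:pdb2}), giving $\Prob(\D{[u^\star,v^\star],f} \in \dpdb{[u^\star,v^\star],F}) = \Prob(\D{F} \in \dpdbfull{F}) \ge \rel{F}$. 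Since the Ethernet events each have probability one, intersecting them with the 5G event does not reduce its probability, so $\Prob(A_f) \ge \rel{F}$. Combined with the inclusion of the previous step this gives the required $\rel{F}$-reliability for $f$, and since $f \in F$ was arbitrary, $\Conf$ feasibly schedules $F$ in the sense of Definition~\ref{def:reliability}.

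This is essentially a ``glue'' theorem, so the difficulty is conceptual rather than computational. The step needing the most care is the inclusion $A_f \subseteq \{\text{success}\}$: robustness must be invoked at \emph{every} intermediate hop, not just the listener, so that PSFP never discards $f$ before it arrives, and one must identify the per-frame, per-hop delay variable $\D{[u^\star,v^\star],f}$ of an execution sequence with the aggregate 5G delay variable $\D{F}$ appearing in~(\ref{eq:pdb}), as well as confirm that an Ethernet PDB is always hit (using that wired delays are bounded and fault-free). A final point worth stating explicitly is the quantifier in Definition~\ref{def:reliability}: the argument above is per-frame, matching the 3GPP reading of $\rel{F}$ as a fraction of delivered frames; a guarantee that all $H/\period{F}$ frames of a hypercycle succeed jointly would additionally require independence (or a union bound) across the frames' 5G delays.
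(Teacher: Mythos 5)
Your proof is correct and follows essentially the same route as the paper's: assumption (iii) discharges the latency and jitter bounds directly, while (i) and (ii) combine to show that the event of all hop delays lying in their PDBs \ldash which has probability at least $\rel{F}$ \rdash is contained in the event that $f$ arrives at $\vl{F}$ inside $\R{\vl{F}, f}$. If anything, your write-up is slightly more careful than the paper's: you make explicit that robustness must be invoked at every intermediate hop together with the Stream Policing rule so that $f$ is never discarded en route, and you sidestep the independence assumption implicit in the paper's product-over-hops bound by observing that the wired-link events occur with probability one.
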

\begin{proof}
  The proof is given in Appendix~\ref{appendix:feasibility}.
\end{proof}

\noindent For example, combining \textit{(i)} and \textit{(ii)} ensures \ldash with a probability of at least $\rel{F_2}$ \rdash that $f_2$ arrives at its listener $L_2$ within the expected arrival interval $\R{L_2, f_2} = [rx_1, rx_2]$.
Moreover, \textit{(iii)} constrains $rx_2 \leq \ete{F_2}$ and $rx_2 - rx_1 \leq \jitter{F_2}$.

\section{Computing Feasible TSN Configurations} \label{sec:fips}
We introduce Full Interleaving Packet Scheduling~(FIPS) to compute feasible TSN configurations $\Conf$, without residing to strict temporal isolation.
In the following, we assume that the CNC already allocated the 5G PDBs, as presented in Section~\ref{sec:robustness:pdb}.
We therefore focus on computing robust TSN configurations that satisfy the latency and jitter constraints of all accepted TSN streams.
We start with an overview.

\subsection{Overview of FIPS}
Given a stream set $\TT$, we follow an incremental scheduling approach that adds one stream $F_j \in \TT$ to an existing TSN configuration (initially empty) at a time.
At the core of the IEEE 802.1Qbv scheduling problem, the scheduler has to decide the transmission order of frames $f_{i_1} \prec f_{i_2} \prec \ldots \prec f_{i_n}$ at each egress port~$[u,v]$.
To reduce the scalability bottleneck of strict temporal isolation, we allow batched frame transmissions over $[u,v]$ that result in a transmission ordering of the form
\begin{equation*}
  B_1 = \{f_{i_1}, f_{i_2}, \ldots, f_{i_{k_1}}\} \prec B_2 \prec \ldots \prec B_n.
\end{equation*}
The operation of FIPS can therefore be split in two parts:
Section~\ref{sec:fips:incremental} presents how to add a new stream $F_j$ to an existing transmission ordering, based on the QoS requirements of $F_j$ and the (feasible) TSN configuration $\Conf_{j-1}$ of the previous iteration.
Section~\ref{sec:fips:configuration} then shows how to derive a robust TSN configuration $\Conf_j$ from the new transmission ordering.
FIPS accepts the stream $F_j$ if and only if $\Conf_j$ satisfies the latency and jitter constraints of $F_j$ and all previously accepted streams.

\subsection{An Incremental Heuristic for Transmission Orderings} \label{sec:fips:incremental}
Let $\TT_{j-1} = \{F_1, \ldots, F_{j-1}\}$ be the set of streams that are already feasibly scheduled under the FIPS configuration $\Conf_{j-1}$.
To add a new stream $F$, we adjust the transmission orderings
\begin{equation*}
  B_1^k \prec B_2^k \prec \ldots \prec B_n^k
\end{equation*}
at each port $[\v{k}{F}, \v{k+1}{F}] \in \path{F}$ by inserting all frames $f \in F$ that are sent within one hypercycle.
We start by \textit{inserting} the frames in the transmission ordering as a new batch, and then check if the batch should be \textit{merged} with its immediate transmission predecessor or successor. 
Thus, the transmission ordering encodes the scheduling decisions, which is later used in Section~\ref{sec:fips:configuration} to derive a robust TSN configuration.

\textbf{Inserting Frames:}
For each frame $f \in F$, we compute a lower bound for the earliest possible transmission at $\v{k}{F}$ with
\begin{equation*}
  \phi([\v{k}{F}, \v{k+1}{F}], f) = \release{f} + \sum_{i = 1}^{k-1} \dpdbmax{[\v{i}{F}, \v{i+1}{F}], f}.
\end{equation*}
Similarly, let $\S{[\v{k}{F}, \v{k+1}{F}], B_i^k}$ denote the transmission start of the batch $B_i^k$ under the previous TSN configuration $\Conf_{j-1}$.
FIPS selects the maximum $i \in \{1, \ldots, n\}$ with
\begin{equation} \label{eq:insertion:1}
  \S{[\v{k}{F}, \v{k+1}{F}], B_i^k} \leq \phi([\v{k}{F}, \v{k+1}{F}], f)
\end{equation}
and insert $f$ in the transmission ordering as $B_i^k \prec f \prec B_{i+1}^k$.
Moreover, to avoid inconsistent transmission orderings, we fix the ordering $B_i^k \prec f$ for all consecutive transmission hops where $B_i^k$ and $f$ share the same FIFO queue.

\begin{table}
  \centering
  \caption{Policy to avoid FIFO inconsistencies.} \label{tab:inconsistency}
  \begin{tabular}{rccc}
    \toprule
     & $[T_1^{DS}, B^{NW}]$ & $[B^{NW}, B_1]$ & $[B_1, L_1]$ \\
    \midrule
    $\S{[u,v], f_1}$ & $\qty{0}{\ms}$ & $\qty{14}{\ms}$ & $\qty{14.01}{\ms}$ \\
    $\phi([u,v], f)$ & $\qty{1}{\ms}$ & $\qty{11}{\ms}$ & $\qty{11.01}{\ms}$ \\[1.5mm]
    Ordering of (\ref{eq:insertion:1}) & $f_1 \prec f$ & $\;\;f\; \prec \, f_1$ & $\;\;f\; \prec \, f_1$ \\
    FIFO Policy & $f_1 \prec f$ & $f_1 \prec f$ & $f_1 \prec f$ \\
    \bottomrule
  \end{tabular}
\end{table}

Table~\ref{tab:inconsistency} shows the calculations for inserting a new stream $F$ into a transmission ordering that already contains $F_1$ from our previous examples (e.g., see Fig.~\ref{fig:problem}).
$F$ has the same specification as $F_1$, but has a phase offset of $\qty{1}{\ms}$ and a smaller 5G PDB with $\dpdbmax{} = \qty{10}{\ms}$.
Solely applying~(\ref{eq:insertion:1}) would imply that the frame $f \in F$ has to overtake $f_1 \in F_1$, which is impossible if both frames share the same FIFO queues along $[T_1^{DS}, B^{NW}, B_1, L_1]$.
We therefore prioritize the already accepted $F_1$ by fixing the transmission ordering $f_1 \prec f$ for all consecutive transmission hops.

\textbf{Merging Batches.}
For wired streams $F$, there is no need for batching and the ordering of the previous step is returned. 
Otherwise, let $\v{k}{F}$ be the TSN translator that follows the 5G link.
With a 5G PDB of $\dpdbfull{}$, FIPS opts to de-jitter frames $f \in F$ at $\v{k}{F}$ for a minimum duration of 
\begin{equation*}
  \sub{\dpdbmax}{\dpdbmin}{[\v{k-1}{F}, \v{k}{F}], f}.
\end{equation*}
Recall that this duration can be in the range of milliseconds, where a strict transmission isolation leads to a major scalability bottleneck.
Hence, when $B_i^k \prec f \prec B_{i+1}^k$ is the transmission ordering of the previous step, FIPS considers three options:
merge $f$ with $B_i^k$, merge $f$ with $B_{i+1}^k$, or return the ordering as it is.
FIPS derives a robust TSN configuration, as described next, for all three options and verifies the QoS guarantees for all streams in $\TT_{j-1} \cup \{F\}$.

\subsection{Robust TSN Configurations from Transmission Orderings} \label{sec:fips:configuration}
Next, we present how to efficiently derive a FIPS configuration $\Conf$ from a given transmission ordering and prove that every FIPS configuration is robust (Theorem~\ref{theorem:fips_robust}).
By checking the latency (\ref{eq:latency}) and jitter (\ref{eq:jitter}) constraints for each frame, the procedure returns successfully if and only if $\Conf$ is feasible.

We start by defining the accumulated PDBs of a frame batch $B_i$.
For Ethernet links, the transmission delay is the summed delays of all frames $f \in B_i$, i.e.,
\begin{equation*}
  \dpdbtransmin{[u,v], B_i} = \dpdbtransmax{[u,v], B_i} = \sum_{f \in B_i} \frac{\size{f}}{\drate{[u,v]}},
\end{equation*}
whereas the total delay $\dpdb{[u,v], B_i}$ includes the propagation delay $\dprop{[u,v]}$ and the processing delay $\dproc{v}$ once.
For 5G links, frequency-division multiplexing allows the simultaneous transmission of all frames in $B_i$, under the delay conditions captured by the respective histograms;
hence, $\dpdb{[u,v], B_i}$ is set to the minimum and maximum bounds of the individual 5G PDBs $\dpdb{[u,v], f}$, for $f \in B_i$.

FIPS continues by determining the earliest possible transmission start $\S{[u,v], B_i}$ for each link $[u,v]$ and each batch $B_i$ by enforcing the following constraints C1--C3: 

\begin{figure}
  \centering
  \tikzset{
  diagonal fill/.style 2 args={fill=#2, path picture={
  \fill[#1, sharp corners] (path picture bounding box.south west) -|
			   (path picture bounding box.north east) -- cycle;}},
  reversed diagonal fill/.style 2 args={fill=#2, path picture={
  \fill[#1, sharp corners] (path picture bounding box.north west) |- 
			   (path picture bounding box.south east) -- cycle;}}
  }
  \begin{tikzpicture}
      \def\offset{0.1}
      \def\scale{0.35}
      \def\steps{2}
      \def\y{-0.75}

      \draw (0,\y*0) edge[thick,->] node[pos=0,left,font=\scriptsize,anchor=east,xshift=-2pt,yshift=8pt] {$[T^{DS}_1, B^{NW}]$} (7,\y*0);
      \draw (0,\y*1) edge[thick,->] node[pos=0,left,font=\scriptsize,anchor=east,xshift=-2pt,yshift=8pt] {$[T^{DS}_2, B^{NW}]$} (7,\y*1);
      \draw (0,\y*2) edge[thick,->] node[pos=0,below,font=\scriptsize,anchor=east,xshift=-2pt,yshift=8pt] {$B^{NW}$ (PSFP)} (7,\y*2);
      \draw (0,\y*3) edge[thick,->] node[pos=0,below,font=\scriptsize,anchor=east,xshift=-2pt,yshift=8pt] {$[B^{NW}, B_1]$} (7,\y*3);
      \draw (0,\y*4) edge[thick,->] node[pos=0,below,font=\scriptsize,anchor=east,xshift=-2pt,yshift=8pt] {$[T_3, B_1]$} (7,\y*4);
      \draw (0,\y*5) edge[thick,->] node[pos=0,below,font=\scriptsize,anchor=east,xshift=-2pt,yshift=8pt] {$[B_1, L_1]$} (7,\y*5);
      \draw (0,\y*6) edge[thick,->] node[pos=0,below,font=\scriptsize,anchor=east,xshift=-2pt,yshift=8pt] {$[B_1, L_2]$} (7,\y*6);

      % T_1 -> B^NW
      \draw[rounded corners=1mm, fill=f1] (\scale*0+\offset,\y*0) rectangle ++(\scale*1.5,0.4) node[pos=0.5,font=\scriptsize] {$f_1$};
      \draw (\scale*1.5+\offset,\y*0+0.5) edge[thick,-{Latex[length=1.5mm]}] node[above=1.5mm,font=\scriptsize] {$tx$} (\scale*1.5+\offset,\y*0);

      % T_2 -> B^NW
      \draw[rounded corners=1mm, fill=f2] (\scale*0+\offset,\y*1) rectangle ++(\scale*1.5,0.4) node[pos=0.5,font=\scriptsize] {$f_2$};
      \draw (\scale*1.5+\offset,\y*1+0.5) edge[thick,-{Latex[length=1.5mm]}] node[above=1.5mm,font=\scriptsize] {$tx$} (\scale*1.5+\offset,\y*1);

      % B^NW PSFP
      \draw[pattern=north east lines] (0,\y*2) rectangle ++(\scale*4+\offset,0.5);
      \draw[diagonal fill={f2}{f1}] (\scale*4+\offset,\y*2) rectangle ++(\scale*4,0.5);  
      \draw[pattern=north east lines] (\scale*8+\offset,\y*2) rectangle ++(\scale*10.5+\offset,0.5);

      % B^NW -> B_1
      \draw[rounded corners=1mm, diagonal fill={f2}{f1}] (\scale*4+\offset,\y*3) rectangle ++(\scale*4.5,0.4) node[pos=0.5,font=\scriptsize] {$\{f_1,f_2\}$};
      \draw (\scale*8.5+\offset,\y*3+0.5) edge[thick,-{Latex[length=1.5mm]}] node[above=1.5mm,font=\scriptsize] {$tx$} (\scale*8.5+\offset,\y*3);
      % T_3 -> B_1
      \draw[rounded corners=1mm, fill=f3] (\scale*8.5+\offset,\y*4) rectangle ++(\scale*1.5,0.4) node[pos=0.5,font=\scriptsize] {$f_3$};
      \draw (\scale*10+\offset,\y*4+0.5) edge[thick,-{Latex[length=1.5mm]}] node[above=1.5mm,font=\scriptsize] {$tx$} (\scale*10+\offset,\y*4);

      % B_1 -> L_1
      \draw[rounded corners=1mm, fill=f1] (\scale*8.5+\offset,\y*5) rectangle ++(\scale*1.5,0.4) node[pos=0.5,font=\scriptsize] {$f_1$};
      \draw (\scale*10+\offset,\y*5+0.5) edge[thick,-{Latex[length=1.5mm]}] node[above=1.5mm,font=\scriptsize] {$tx$} (\scale*10+\offset,\y*5);

      % B_1 -> L_2
      \draw[rounded corners=1mm, fill=f2] (\scale*8.5+\offset,\y*6) rectangle ++(\scale*1.5,0.4) node[pos=0.5,font=\scriptsize] {$f_2$};
      \draw (\scale*10+\offset,\y*6+0.5) edge[thick,-{Latex[length=1.5mm]}] node[above=1.5mm,font=\scriptsize] {$tx$} (\scale*10+\offset,\y*6);
      \draw[rounded corners=1mm, fill=f3] (\scale*10.1+\offset,\y*6) rectangle ++(\scale*1.5,0.4) node[pos=0.5,font=\scriptsize] {$f_3$};
      \draw (\scale*11.6+\offset,\y*6+0.5) edge[thick,-{Latex[length=1.5mm]}] node[above=1.5mm,font=\scriptsize] {$tx$} (\scale*11.6+\offset,\y*6);
  \end{tikzpicture}
  \caption{FIPS configuration when batching $\{f_1, f_2\}$ at $[B^{NW}, B_1]$ and choosing $f_2 \prec f_3$ at $[B_1, L_2]$.} \label{fig:fips}
\end{figure}

\textbf{C1) Sequential Transmissions.}
For each frame $f \in B_i$, the transmission start $\S{[u,v], B_i}$ is deferred until the latest arrival of $f$ at the bridge $u$, i.e.,
\begin{equation*}
  \S{[u,v], B_i} \geq \Rmax{u, f}.
\end{equation*}
In Fig.~\ref{fig:fips}, the transmission of $\{f_1, f_2\}$ via $[B^{NW}, B_1]$ has to wait until after the latest possible arrival of $f_1$ and $f_2$.
In contrast, for the first transmission hop of a frame, $\Rmax{T_1^{DS}, f_1}$ is set to the earliest release time of $f_1$ by the talker.

\textbf{C2) Transmission Ordering.}
If $B_i$ is not the first batch (within the hypercycle) to be transmitted over $[u,v]$, its transmission is deferred until $B_{i-1}$ is fully transmitted, i.e.,
\begin{equation*}
  \S{[u,v], B_i} \geq \add{\S}{\dpdbmax}{[u,v], B_{i-1}}.
\end{equation*}
For example, the transmission ordering $f_2 \prec f_3$ implies in Fig.~\ref{fig:fips} that the transmission of $f_3$ via $[B_1, L_2]$ can only start after the transmission of $f_2$ has finished.
 
\textbf{C3) Batch Fault Isolation.}
For each frame $f \in B_i$, it must be ensured that $f$ takes its intended transmission slot over the subsequent hop $[v,w]$.
Formally, let $B_j'$ denote the frame batch of $f$ at $[v,w]$.
To ensure $f$ never takes the slot of $B_{j-1}'$, the transmission start of $\S{[u,v], B_i}$ is delayed so that $f$ never arrives at $v$ before the transmission of $B_{j-1}'$ has finished, i.e.,
\begin{equation*}
  \S{[u,v], B_i} \geq \add{\S}{\dpdbmax}{[v,w], B_{j-1}'} - \dpdbmin{[u,v], f}.
\end{equation*}
For example, Fig.~\ref{fig:fips} delays $\S{[T_3, B_1], f_3}$ to ensure $f_3$ never takes $f_2$'s transmission slot over the subsequent link, even if $f_2$ is discarded by PSFP for violating its 5G PDB.
Also note that, in a multi-queue setting, C3 is only necessary if $f_2$ and $f_3$ share the same queue at $[B_1, L_2]$.
Moreover, C3 directly implies that frames are transmitted in a FIFO manner.
 
\textbf{Robust TSN Configuration.}
C1--C3 define a natural recursion order where the transmission start $\S{[u,v], B_i}$ depends on, at most, three earlier transmissions.
In case a recursion cycle is detected, the procedure is aborted;
otherwise, the GCL at link $[u,v]$ is derived as $\GCL([u,v]) = \{[o_i, c_i]\}_{i=1}^n$ with
\begin{equation*}
  o_i = \S{[u,v], B_i} \quad \text{and} \quad c_i = \add{\S}{\dpdbmax}{[u,v], B_i}.
\end{equation*}
Similarly, the PSFP configuration $\R{}$ is set for each frame $f$ and each bridge $v \in \path{f}$ with
\begin{align*}
  \Rmin{v, f} &= \S{[u,v], B_i} + \dpdbmin{[u,v], f}, \quad \text{and} \\
  \Rmax{v, f} &= \add{\S}{\dpdbmax}{[u,v], B_i},
\end{align*}
which is the earliest and latest arrival of $f$ at $v$ under FIPS.

With $\Conf = (\GCL, \R{})$, it is straightforward to check if the latency~(\ref{eq:latency}) and jitter~(\ref{eq:jitter}) requirements are satisfied for all streams. 
As detailed before, FIPS therefore accepts a new stream $F$ if and only if $\Conf$ satisfies the latency and jitter constraints of $F$ and all previously accepted streams.
Finally, we show that FIPS is provably robust (Theorem~\ref{theorem:fips_robust}) and thereby provides formal end-to-end QoS guarantees (Corollary~\ref{corollary:fips_feasibility}).

\begin{theorem} \label{theorem:fips_robust}
  Every TSN configuration $\Conf = (\GCL, \R{})$ derived by FIPS robustly schedules all TSN streams $F \in \TT$.
\end{theorem}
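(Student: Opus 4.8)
The plan is to prove Theorem~\ref{theorem:fips_robust} by induction over the dependency order that constraints C1--C3 impose on the batch transmission starts $\S{[u,v], B_i}$. Because $\Conf$ is a configuration that FIPS actually returns, its derivation did not detect a recursion cycle, so this dependency relation is a well-founded (strict) partial order on the ``transmission events'' $([u,v], B_i)$, and induction over it is well defined. For each transmission event I would establish, for \emph{every} execution sequence $\E = (\T{}, \D{})$, two coupled invariants at once: invariant~(a), that a frame $f \in B_i$ whose per-hop delays all lie within their PDBs on the portion of $\path{f}$ strictly before $u$ arrives at $u$ inside $\R{u, f}$ and is hence admitted by Stream Policing, whereas a frame that already experienced a PDB violation has been discarded by PSFP at the first hop where its delay left the corresponding PDB; and invariant~(b), that the batch $B_i$ occupies exactly the GCL window $[\S{[u,v], B_i}, \add{\S}{\dpdbmax}{[u,v], B_i}]$ of $\GCL([u,v])$, i.e., nothing is transmitted over $[u,v]$ before $\S{[u,v], B_i}$, the admitted frames of $B_i$ are forwarded contiguously from that instant, and each of them is enqueued at $v$ no later than $\add{\S}{\dpdbmax}{[u,v], B_i}$. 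Theorem~\ref{theorem:fips_robust} is then the instance of invariant~(a) with $v = \v{k}{F}$: the precondition of Definition~\ref{def:robustness} is exactly the hypothesis of~(a), and $\add{\T}{\D}{[\v{k-1}{F}, \v{k}{F}], f} \in \R{\v{k}{F}, f}$ is its conclusion.

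The base cases are the minimal events, which are necessarily first-hop transmissions at a talker (any non-talker event depends, through C1, on an upstream transmission and is therefore not minimal); there invariant~(a) is vacuous, and invariant~(b) follows from the synchronized-talker assumption together with GCL Consistency, since the talker releases each frame at $\release{f}$ --- which FIPS makes coincide with $\R{\v{1}{F}, f}$ --- and emits it precisely at its batch's gate opening. For the inductive step I would fix $([u,v], B_i)$ and assume both invariants for all earlier events. For invariant~(a): a frame $f \in B_i$ reaches $u$ over its previous hop $[u', u]$ as a member of some batch $B'$; invariant~(b) for $([u', u], B')$ localises $f$'s transmission instant, and a short estimate using $\D{[u', u], f} \in \dpdb{[u', u], f}$ together with the fact that the batch PDB dominates the per-frame PDB places $\add{\T}{\D}{[u', u], f}$ inside $\R{u, f}$; conversely, if $[u', u]$ is the first hop at which $f$'s delay leaves its PDB, the same estimate --- using $\T{[u', u], f} \geq \S{[u', u], B'}$, which holds because by invariant~(b) for the earlier batches at $[u',u]$ none of their windows carries $f$ --- shows the arrival lands outside $\R{u, f}$, so PSFP drops $f$. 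For invariant~(b): by C1 we have $\S{[u,v], B_i} \geq \Rmax{u, f}$ for every $f \in B_i$, so by~(a) all admitted frames of $B_i$ are queued by the gate opening; by C2 and invariant~(b) for $([u,v], B_{i-1})$, the link is idle and holds no residue of earlier batches at that instant; and C3, imposed when the incoming batch of any later-batch frame $g$ was scheduled, guarantees (again combined with the PSFP-discard half of~(a)) that $g$ cannot be enqueued at $u$ before $B_i$ has finished transmitting over $[u,v]$, so no later-batch frame is ever ahead of $B_i$ in the FIFO queue during the window. Hence, by FIFO Queueing and GCL Consistency, TAS forwards precisely the admitted frames of $B_i$ contiguously from $\S{[u,v], B_i}$, and since $\add{\S}{\dpdbmax}{[u,v], B_i} - \S{[u,v], B_i} = \dpdbmax{[u,v], B_i}$ accommodates their aggregated serialization $\dpdbtransmax{[u,v], B_i}$ plus $\dprop{[u,v]} + \dproc{v}$, each one is enqueued at $v$ within the window; for 5G links the argument is verbatim with frequency-division multiplexing in place of sequential serialization and $\dpdbmax{[u,v], B_i} = \max_{f \in B_i}\dpdbmax{[u,v], f}$.

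The routine parts are the arrival-time estimates (bounded arithmetic with PDB endpoints) and the GCL/PSFP bookkeeping for one batch in isolation. The main obstacle is making invariants~(a) and~(b) interlock correctly through PSFP: one must check that a frame whose PDB is violated is filtered out at exactly the hop where it would otherwise spill into a queue and delay a well-behaved frame, so that the fault-isolation intent of C3 is not defeated by residual queueing backlog. This is also where the FIFO-consistency policy of Section~\ref{sec:fips:incremental} is needed: it is what licenses the ``later-batch frames arrive later at $u$'' step in invariant~(b) once frames share FIFO queues over several consecutive hops, because without it the runtime arrival order at $u$ need not match the batch order FIPS computed. For this reason I would prove~(a) and~(b) jointly for each transmission event rather than establishing one invariant network-wide and then the other.
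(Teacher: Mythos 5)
Your proposal is sound in its overall architecture but takes a genuinely different route from the paper. The paper proves the theorem by induction over the hop index of a single frame and concentrates the difficulty in Lemma~\ref{lemma:fips}: if $f$ arrives at $\v{l}{f}$ inside $\R{\v{l}{f}, f}$, then $f$ is transmitted at exactly the \emph{expected offset} $\SE{\clink{l}{f},f}$, a per-execution quantity recording which batch-mates happened to be serialized before $f$ in $\E$. That lemma is proved by a minimal-counterexample argument: pick the deviating pair with the smallest transmission time in $\E$ and refute the three possible deviations (too late, hence blocked by a frame that must itself have deviated earlier; too early but still inside $\S{\clink{l}{f},f}$, so a batch predecessor deviated earlier; earlier than $\Smin{\clink{l}{f},f}$, so by C3 and the PSFP construction $f$ would have arrived outside its policing interval and been dropped). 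You instead run a forward induction over the C1--C3 dependency DAG on transmission events $([u,v],B_i)$ \ldash well-founded precisely because FIPS aborts on recursion cycles \rdash carrying a joint invariant of arrival correctness at $u$ and exact occupancy of the GCL window at $[u,v]$. These are two faces of the same argument: your window-occupancy invariant is a coarsening of the paper's $\SE{}$ (you only need that admitted frames of $B_i$ finish by $\add{\S}{\dpdbmax}{[u,v],B_i} = \Rmax{v,f}$, not their exact order-dependent offsets), and the paper's ``first deviation'' is the least event at which your invariant would fail. Your version makes the role of C1--C3 and of the acyclicity check more transparent; the paper's version avoids stating and discharging the queue-occupancy invariant explicitly, at the price of a three-way case split inside a contradiction.

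One sub-claim of your invariant~(a) is false as stated and should be excised: a frame whose delay leaves its \emph{per-frame} PDB is not necessarily discarded at that hop, because FIPS sets $\Rmax{v,f} = \add{\S}{\dpdbmax}{[u,v],B_i}$ from the \emph{batch} PDB, which for 5G links is the maximum over all frames in $B_i$; a frame with a smaller individual budget can overrun it and still land inside $\R{v,f}$. Fortunately nothing in your argument needs this direction. Everywhere you invoke ``the PSFP-discard half of~(a)'' \ldash in particular to show that a later-batch frame $g$ cannot occupy the queue during $B_i$'s window \rdash the raw Stream Policing axiom already suffices: $g$ is either dropped at $u$ or arrives no earlier than $\Rmin{u,g}$, which C3 places after $B_i$'s window closes. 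Using the axiom directly is also what keeps your induction non-circular at that step, since the event delivering $g$ to $u$ is constrained by C3 to depend on $([u,v],B_i)$ and therefore comes \emph{later} in your dependency order, so its invariant is not yet available. This is exactly how the paper sidesteps the issue: Lemma~\ref{lemma:fips} never claims that misbehaving frames are dropped, only that frames which do pass PSFP are transmitted on time.
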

\begin{proof}
  The proof is given in Appendix~\ref{appendix:theorem_proof}.
\end{proof}

\begin{corollary} \label{corollary:fips_feasibility}
  Every TSN configuration $\Conf = (\GCL, \R{})$ derived by FIPS feasibly schedules all accepted TSN streams.
\end{corollary}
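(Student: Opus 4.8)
The plan is to obtain the corollary directly from the two preceding results, Theorem~\ref{theorem:feasibility} (Feasibility) and Theorem~\ref{theorem:fips_robust} (robustness of FIPS configurations), by checking the three hypotheses \textit{(i)}--\textit{(iii)} of Theorem~\ref{theorem:feasibility} for every accepted stream. Concretely, I would fix an arbitrary configuration $\Conf = (\GCL, \R{})$ produced by the incremental procedure of Section~\ref{sec:fips} (i.e., some $\Conf_j$), let $F$ be any stream that has been accepted into $\Conf$, and show that $\Conf$ feasibly schedules $F$ in the sense of Definition~\ref{def:reliability}; since $F$ is arbitrary among the accepted streams, the claim follows.

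For hypothesis \textit{(i)} there is nothing left to prove: Theorem~\ref{theorem:fips_robust} states that \emph{every} configuration derived by FIPS robustly schedules \emph{every} stream $F \in \TT$, so in particular $\Conf$ robustly schedules $F$. For hypothesis \textit{(ii)}, I would invoke the standing assumption of Section~\ref{sec:fips} that the CNC has already allocated the 5G PDBs exactly as in Section~\ref{sec:robustness:pdb}; the closed-form choice of $\dpdbmin{F}$ and $\dpdbmax{F}$ given there is shown to solve (\ref{eq:pdb}) optimally, so the PDB of the single 5G link on $\path{F}$ (if there is one) satisfies the reliability constraint (\ref{eq:pdb2}). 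For the Ethernet links on $\path{F}$, the PDB is a degenerate deterministic interval capturing the exact serialization, propagation, and processing delay, so (\ref{eq:pdb2}) holds trivially with probability one; hence \textit{(ii)} is met for $F$ on every hop of its path.

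Hypothesis \textit{(iii)} is the only part requiring a genuine (if short) argument, and it is also where I expect the main obstacle. Here I would appeal to the acceptance rule of FIPS: a new stream $F_j$ is accepted only if the configuration $\Conf_j$ derived from the updated transmission ordering satisfies the latency~(\ref{eq:latency}) and jitter~(\ref{eq:jitter}) bounds of $F_j$ \emph{and} of all previously accepted streams, and is otherwise rejected. Reading this as an inductive invariant, after each incremental step the current configuration satisfies (\ref{eq:latency}) and (\ref{eq:jitter}) for all streams accepted so far; applied to $\Conf$ itself this gives \textit{(iii)} for $F$. The subtlety I would be careful to spell out is that inserting a later stream can shift the batch transmission starts $\S{[u,v], B_i}$ and therefore the derived arrival intervals $\R{v, f}$ of an already-accepted $F$; it is precisely the ``all previously accepted streams'' clause of the acceptance rule that prevents such a shift from silently breaking $F$, so the invariant is genuinely preserved rather than merely established at the step where $F$ is first inserted. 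With \textit{(i)}--\textit{(iii)} verified, Theorem~\ref{theorem:feasibility} immediately yields that $\Conf$ feasibly schedules $F$, completing the proof.
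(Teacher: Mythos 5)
Your proposal is correct and follows essentially the same route as the paper's own proof: verify hypotheses \textit{(i)}--\textit{(iii)} of Theorem~\ref{theorem:feasibility} using Theorem~\ref{theorem:fips_robust} for robustness, the PDB allocation of Section~\ref{sec:robustness:pdb} for \textit{(ii)}, and the acceptance rule for the latency and jitter bounds. Your additional remark on \textit{(iii)}\dash that the ``all previously accepted streams'' clause is what makes the invariant survive later insertions that shift batch transmission starts\dash is a point the paper leaves implicit, but it does not change the argument.
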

\begin{proof}
  By Theorem~\ref{theorem:feasibility}, feasibility is induced by \textit{(i)}--\textit{(iii)}.
  FIPS satisfies
  \begin{enumerate*}[label=(\roman*)]
    \item with Theorem~\ref{theorem:fips_robust}, 
    \item by allocating 5G PDBs as detailed in Section~\ref{sec:robustness:pdb}, and
    \item by accepting only streams for which $\Conf$ satisfies the latency and jitter requirements.
  \end{enumerate*}
\end{proof}

\section{Evaluation} \label{sec:eval}
We compare FIPS against the two groups that existing IEEE 802.1Qbv scheduling techniques fall into: \textit{non-robust} and \textit{strict transmission isolation} approaches.
Section~\ref{sec:eval:non-robust} shows that, compared to non-robust approaches, FIPS can serve a high-criticality stream with a $\qty{99.99}{\percent}$ reliability requirement.
Section~\ref{sec:eval:sti} shows that strict transmission isolation leads to poor scalability, whereas FIPS improves the number of schedulable wireless streams by up to $\times 45$.

\input{eval_topo.tex}

\subsection{Methodology}
We simulate an automated guided vehicle~(AGV) as part of a networked control system.
To this end, we use the OMNeT++ simulation framework~\cite{Varga2010}, including the INET~\cite{Mros2019} and 6GDetCom~\cite{lucas_2023_10401977} extensions.
As shown in Fig.~\ref{fig:eval_topology}, the network is partitioned into the AGV-internal network and the TSN backbone.
Internal TSN devices and switches are connected by $\qty{100}{\mega\bit\per\s}$ Ethernet links with a propagation delay of $\qty{50}{\ns}$. 
The partitions are interconnected by a logical 5G bridge with uplink (DS-TT to NW-TT) and downlink (NW-TT to DS-TT) histograms taken from real 5G measurements \cite{downlink_example_histogram}.

We differentiate between wired and wireless traffic. 
Wired traffic stays within the AGV-internal network or the TSN backbone, with specifications $\size{F} = \qty{100}{\byte}$, $\period{F} = \qty{5}{\ms}$, $\ete{f} = \qty{500}{\us}$, and $\jitter{f} = \qty{1}{\us}$.
Wireless traffic traverses the logical 5G bridge in uplink or downlink direction, with specifications $\size{F} = \qty{100}{\byte}$ and $\period{F} = \qty{20}{\ms}$.
Depending on the evaluated scenario, we consider different QoS requirements for wireless streams.

All computations (simulation and scheduling) are performed on the same machine, equipped with two AMD EPYC 7413 @\qty{3.6}{\giga\hertz} ($2\times24$ cores) and with $\qty{256}{\giga\byte}$ of memory.
\textit{To facilitate reproducibility, we will publish the source code and Docker images for each benchmark once the paper is accepted.}

\subsection{Comparing FIPS with Non-Robust Approaches} \label{sec:eval:non-robust}
To illustrate the importance of robust end-to-end scheduling, we compare the achieved QoS of FIPS and non-robust scheduling approaches.
As existing work typically considers scalar 5G packet delays~\cite{9212049,9940254}, we compare FIPS to using the median (MED) or the maximum (MAX) 5G packet delays from the histograms.

% Results are provided in simulation_results.log
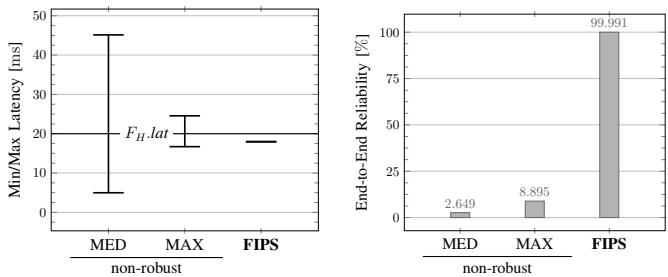
\begin{figure}
  \centering
  \begin{subfigure}{0.48\columnwidth}
    \centering
      \resizebox{\textwidth}{!}{%
      \begin{tikzpicture}
	\begin{axis}[
	      ylabel = {\large Min/Max Latency [\unit{\ms}]},
	      xtick = {1,2,3},
	      xtick align=center, 
	      minor y tick num = 3,
	      enlarge y limits,
	      xmin = 0.25, xmax = 3.75,
	      ymin = 0, ymax = 47,
	      ymajorgrids,
	      xticklabels = {\large MED, \large MAX, \large \textbf{FIPS}},
	      clip=false,
	    ]

	    \draw[thick] (axis cs:0.25,20) -- (axis cs:3.75,20);
	    \node[fill=white] at (axis cs: 1.5,20) {\large $\ete{F_H}$};

	    % CORE_AGV0_00
	    \intervalplt at (1, 4.983, 45.136);
	    \intervalplt at (2, 16.708, 24.55);
	    \intervalplt at (3, 17.918, 18.017);

	    \draw[thick] (axis cs: 0.5,-11.5) -- (axis cs: 2.5,-11.5);
	    \node at (axis cs: 1.5,-14) {\large non-robust};
	\end{axis}
      \end{tikzpicture}
      }
  \end{subfigure}
  \hfill
  \begin{subfigure}{0.48\columnwidth}
    \centering
      \resizebox{\textwidth}{!}{%
	\begin{tikzpicture}
	  \begin{axis}[
		ybar,
		ylabel = {\large End-to-End Reliability [\unit{\percent}]},
		ytick = {0, 25, 50, 75, 100},
		xtick align=center, 
		minor y tick num = 3,
		ymajorgrids,
		enlarge y limits,
		xtick = {1,2,3},
		xmin = 0.25, xmax = 3.75,
		xticklabels = {\large MED, \large MAX, \large \textbf{FIPS}},
		nodes near coords,
		nodes near coords align={vertical},
		nodes near coords style={/pgf/number format/.cd,precision=3},
		clip=false,
	    ]

	      \addplot[black!60,fill=black!30,bar width=0.5cm] coordinates {(1, 2.6491) (2, 8.8948) (3, 99.9914)};

	      \draw[thick] (axis cs: 0.5,-21) -- (axis cs: 2.5,-21);
	      \node at (axis cs: 1.5,-27) {\large non-robust};
	  \end{axis}
	\end{tikzpicture}
      }
  \end{subfigure}
  \caption{Simulation results showing the achieved QoS of $F_H$.} \label{fig:simulation}
\end{figure}

We analyze the behavior of ten high-criticality wireless streams $F_H$ (five per uplink/downlink direction) with requirements $\ete{F_H} = \qty{20}{\ms}$, $\jitter{F_H} = \qty{100}{\us}$, and $\rel{F_H} = \qty{99.99}{\percent}$.
To increase link congestion, we schedule an additional 10~wired streams (five per wired partition) and 80~wireless streams with $\rel{F} = \qty{50}{\percent}$.
The experiments are repeated for one million hypercycles, a simulation time of approx. $\qty{5.6}{h}$.

Fig.~\ref{fig:simulation} shows the achieved QoS guarantees of a representative high-criticality stream $F_H$.
The results clearly show that the observed end-to-end reliability diminishes for both non-robust approaches below $\qty{10}{\percent}$ and is thereby far from the required $\qty{99.99}{\percent}$.
Moreover, the latency results demonstrate that a frame reordering events as in Section~\ref{sec:problem_description} can create a queueing backlog that pushes frames of $F_H$ until after their deadline or even into subsequent hypercycles.
These results therefore underline the need for provable per-stream QoS guarantees, as we provide with FIPS.

% Logs are provided in scalability_results.log
\begin{figure}[b]
  \centering
  \begin{subfigure}{0.48\columnwidth}
    \centering
    \resizebox{\textwidth}{!}{%
    \begin{tikzpicture}
	  \begin{axis}[
		ybar,
		ylabel = {\# Wireless Streams},
		xlabel = {Stream Reliability $\rel{F}$ [$\unit{\percent}$]},
		xtick align=center, 
		minor y tick num = 3,
		ymajorgrids,
		xmin=0.5, xmax=4.5,
		ymax=275,
		xtick = {1,2,3,4},
		xticklabels = {\qty{90}{\percent}, \qty{99}{\percent}, \qty{99.9}{\percent}, \qty{99.99}{\percent}},
		nodes near coords,
		x=1.4cm,
		y=0.016cm,
		nodes near coords align={vertical},
		legend style={at={(0.5,1.05)}, column sep=1ex, anchor=north},
		legend columns=-1,
	    ]
	      \addplot[black!40,fill=black!40,bar width=0.5cm] coordinates {(1, 5) (2, 4) (3, 4) (4, 4)};
	      \addlegendentry{STI}
	      \addplot[black!70,fill=black!70,bar width=0.5cm] coordinates {(1, 226) (2, 180) (3, 143) (4, 143)};
	      \addlegendentry{\textbf{FIPS}}
	  \end{axis}
    \end{tikzpicture}
    }
  \end{subfigure}
  \hfill
  \begin{subfigure}{0.48\columnwidth}
    \centering
    \resizebox{\textwidth}{!}{%
    \begin{tikzpicture}
	  \begin{axis}[
		ylabel = {\# Wireless Streams},
		xlabel = {Jitter $\jitter{F}$ [$\unit{\us}$]},
		xtick align=center, 
		minor y tick num = 3,
		ymajorgrids,
		xmin=-10, xmax=110,
		ymax=275,
		y=0.016cm,
		x=0.045cm,
		legend style={at={(0.44,1.05)}, column sep=1ex, anchor=north},
		legend columns=2,
	    ]
	      \addlegendimage{empty legend}
	      \addlegendentry{\hspace*{-0.75cm}STI}
	      \addlegendimage{empty legend}
	      \addlegendentry{\hspace*{-0.75cm}\textbf{FIPS}}
	      \addplot[black!40,dashed,mark options={black!40,scale=1.2,solid},mark=*] coordinates {(1, 5) (20, 5) (40, 5) (60, 5) (80, 5) (100, 5)};
	      \addlegendentry{$\qty{90}{\percent}$}
	      \addplot[black!70,mark options={black!70,scale=1.2},mark=*] coordinates {(1, 29) (20, 63) (40, 99) (60, 136) (80, 174) (100, 227)};
	      \addlegendentry{$\qty{90}{\percent}$}
	      \addplot[black!40,dashed,mark options={black!40,solid,scale=1.75},mark=triangle*] coordinates {(1, 4) (20, 4) (40, 4) (60, 4) (80, 4) (100, 4)};
	      \addlegendentry{$\qty{99.99}{\percent}$}
	      \addplot[black!70,mark options={black!70,scale=1.75},mark=triangle*] coordinates {(1, 12) (20, 36) (40, 60) (60, 84) (80, 108) (100, 143)};
	      \addlegendentry{$\qty{99.99}{\percent}$}
	  \end{axis}
    \end{tikzpicture}
    }
  \end{subfigure}
  \caption{Scalability results for STI and FIPS. 
    The results on the left are obtained with $\jitter{F} = \qty{100}{\us}$.
  } \label{fig:scalability}
\end{figure}
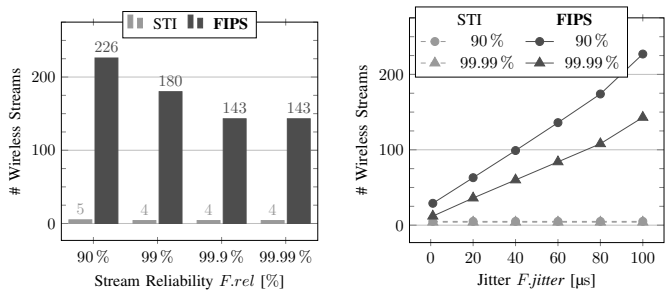

\subsection{Comparing FIPS with Strict Transmission Isolation} \label{sec:eval:sti}
Next, we analyze the impact of robustness on scalability in the number of schedulable streams.
We compare FIPS to strict transmission isolation (STI) approaches, e.g., as commonly employed in wired TSN~\cite{nwps,Craciunas2016RTNS}.
To constitute a fair comparison, STI configurations are derived using the incremental heuristic of FIPS but restrict batching to one frame per batch.

Fig.~\ref{fig:scalability} shows the maximum number of wireless streams for which STI and FIPS find a feasible TSN configuration.
We consider stream sets that consist of 30 internal streams (15 per wired partition) and 400 wireless streams (200 per up-/downlink direction) with randomly generated paths.
The results show the total number of accepted wireless streams, averaged over 100 stream sets, in dependence of different reliability ($\qty{90}{\percent}$ -- $\qty{99.99}{\percent}$) and jitter ($\qty{1}{\us}$ -- $\qty{100}{\us}$) requirements.

Fig.~\ref{fig:scalability} shows the poor scalability of STI with little to no variation over the different QoS requirements.
This is due to the large 5G packet delay variations that cause STI to reserve egress queues at the DS-TT/NW-TT exclusively for individual frames.
In contrast, FIPS shows an expected downward trend for increasing reliability requirements and an upward trend for increasing jitter allowance. 
Hence, we identify the jitter allowance of streams, which essentially restricts the maximum batch size, as the limiting factor for deploying FIPS at scale.

\section{Conclusion} \label{sec:conclusion}
We presented FIPS, a wireless-friendly IEEE 802.1Qbv scheduler that provides formal end-to-end QoS guarantees for each stream.
With the high susceptibility of IEEE 802.1Qbv schedulers to runtime uncertainty, FIPS overcomes the major challenges of \textit{staying robust} against 5G packet delay variations and \textit{remaining scalable} even in 5G bottleneck scenarios.

In future work, we plan to extend the coordination between FIPS and the 5G system, allowing for more fine-grained 5G QoS provisioning that is captured by the 5G PDB contracts.

\section*{Acknowledgment}
This work was supported by the European Commission through the H2020 project DETERMINISTIC6G (Grant Agreement no. 101096504).
We thank Melanie Heck and the DETERMINISTIC6G consortium for their valuable feedback.

\bibliographystyle{IEEEtran.bst}
\bibliography{IEEEabrv,bibliography.bib}

\clearpage
\onecolumn

\appendix
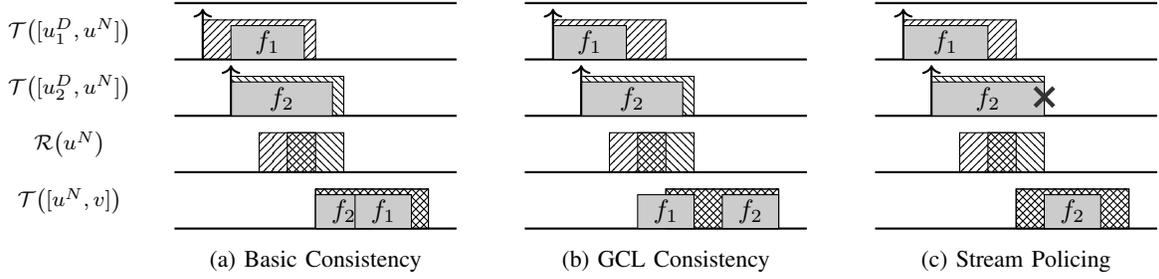
\begin{figure*}
\centering
    \begin{subfigure}{0.1\textwidth}
	\begin{tikzpicture}
	    \begin{scope}[scale=0.75]
		\node at (0,0.5) {\footnotesize $\T{[u^N,v]}$};
		\node at (0,1.5) {\footnotesize $\R{u^N}$};
		\node at (0,2.5) {\footnotesize $\T{[u_2^D, u^N]}$};
		\node at (0,3.5) {\footnotesize $\T{[u_1^D, u^N]}$};
		\draw[white] (0,4)--(1,4);
		\draw[white] (0,-0.8)--(1,-0.8);
	    \end{scope}
	\end{tikzpicture}
    \end{subfigure}
    \begin{subfigure}{0.25\textwidth}
	\centering
	\begin{tikzpicture}
	    \begin{scope}[scale=0.75]
		\draw[thick] (0,0) -- (5,0);
		\draw[thick] (0,1) -- (5,1);
		\draw[thick] (0,2) -- (5,2);
		\draw[thick] (0,3) -- (5,3);
		\draw[thick] (0,4) -- (5,4);

		\draw[pattern=north east lines] (0.5,3) rectangle ++(2,0.7);
		\draw[fill=black!20] (1,3) rectangle ++(1.3,0.6) node[pos=.5] {$f_1$};
		\draw[<-, thick] (0.5,3.9) -- (0.5,3);

		\draw[pattern=north west lines] (1,2) rectangle ++(2,0.7);
		\draw[fill=black!20] (1,2) rectangle ++(1.8,0.6) node[pos=.5] {$f_2$};
		\draw[<-, thick] (1,2.9) -- (1,2);

		\draw[pattern=north east lines] (1.5,1) rectangle ++(1,0.7);
		\draw[pattern=north west lines] (2,1) rectangle ++(1,0.7);

		\draw[pattern=north west lines] (2.5,0) rectangle ++(2,0.7);
		\draw[pattern=north east lines] (2.5,0) rectangle ++(2,0.7);
		\draw[fill=black!20] (2.5,0) rectangle ++(1,0.6) node[pos=.5] {$f_2$};
		\draw[fill=black!20] (3.2,0) rectangle ++(1,0.6) node[pos=.5] {$f_1$};
	    \end{scope}
	\end{tikzpicture}
      \caption{Basic Consistency} \label{fig:exec_seq:basic_consistency}
    \end{subfigure}
    \begin{subfigure}{0.25\textwidth}
	\centering
	\begin{tikzpicture}
	    \begin{scope}[scale=0.75]
		\draw[thick] (0,0) -- (5,0);
		\draw[thick] (0,1) -- (5,1);
		\draw[thick] (0,2) -- (5,2);
		\draw[thick] (0,3) -- (5,3);
		\draw[thick] (0,4) -- (5,4);

		\draw[pattern=north east lines] (0.5,3) rectangle ++(2,0.7);
		\draw[fill=black!20] (0.5,3) rectangle ++(1.3,0.6) node[pos=.5] {$f_1$};
		\draw[<-, thick] (0.5,3.9) -- (0.5,3);

		\draw[pattern=north west lines] (1,2) rectangle ++(2,0.7);
		\draw[fill=black!20] (1,2) rectangle ++(1.8,0.6) node[pos=.5] {$f_2$};
		\draw[<-, thick] (1,2.9) -- (1,2);

		\draw[pattern=north east lines] (1.5,1) rectangle ++(1,0.7);
		\draw[pattern=north west lines] (2,1) rectangle ++(1,0.7);

		\draw[pattern=north west lines] (2.5,0) rectangle ++(2,0.7);
		\draw[pattern=north east lines] (2.5,0) rectangle ++(2,0.7);
		\draw[fill=black!20] (2,0) rectangle ++(1,0.6) node[pos=.5] {$f_1$};
		\draw[fill=black!20] (3.5,0) rectangle ++(1,0.6) node[pos=.5] {$f_2$};
	    \end{scope}
	\end{tikzpicture}
	\caption{GCL Consistency} \label{fig:exec_seq:gcl_consistency}
    \end{subfigure}
    \begin{subfigure}{0.25\textwidth}
	\centering
	\begin{tikzpicture}
	    \begin{scope}[scale=0.75]
		\draw[thick] (0,0) -- (5,0);
		\draw[thick] (0,1) -- (5,1);
		\draw[thick] (0,2) -- (5,2);
		\draw[thick] (0,3) -- (5,3);
		\draw[thick] (0,4) -- (5,4);

		\draw[pattern=north east lines] (0.5,3) rectangle ++(2,0.7);
		\draw[fill=black!20] (0.5,3) rectangle ++(1.5,0.6) node[pos=.5] {$f_1$};
		\draw[<-, thick] (0.5,3.9) -- (0.5,3);

		\draw[pattern=north west lines] (1,2) rectangle ++(2,0.7);
		\draw[fill=black!20] (1,2) rectangle ++(2,0.6) node[pos=.5] {$f_2$};
		\draw[<-, thick] (1,2.9) -- (1,2);

		\err at (3,2);

		\draw[pattern=north east lines] (1.5,1) rectangle ++(1,0.7);
		\draw[pattern=north west lines] (2,1) rectangle ++(1,0.7);

		\draw[pattern=north west lines] (2.5,0) rectangle ++(2,0.7);
		\draw[pattern=north east lines] (2.5,0) rectangle ++(2,0.7);
		\draw[fill=black!20] (3,0) rectangle ++(1,0.6) node[pos=.5] {$f_2$};
	    \end{scope}
	\end{tikzpicture}
	\caption{Stream Policing} \label{fig:exec_seq:stream_policing}
    \end{subfigure}
    \caption{Invalid execution sequences violating Basic Consistency (a), GCL Consistency (b), or Stream Policing (c)}
    \label{fig:exec_seq}
\end{figure*}

\subsection{Formal Definition of Execution Sequences}
For a network graph $G$ with TSN configuration $\Conf = (\R{}, \GCL{})$, we define an execution sequence $\E(G, \C)$ as a tuple $(\T{}, \D{})$,\footnote{If both $G$ and $\Conf$ are clear from context, we simply write $\E$.} where, for each link $[u,v] \in E$ and frame $f \in \TT_{[u,v]}$
\begin{itemize}
    \item $\T{[u,v],f}$ describes the transmission offset in $\E$, and
    \item $\D{[u,v],f}$ describes the transmission delay in $\E$.
\end{itemize}
To model dropped frames, an execution sequence is allowed to assign $\infty$ to $\T{}$ and $\D{}$.

Formalizing execution sequences is motivated by two aspects.
First, formal constraints are required to prove robustness of a TSN configuration $\Conf$.
Second, formal constraints provide valuable insight into the technical requirements the network infrastructure must provide to guarantee the services' reliability requirements.
We start in Section~\ref{sec:execution_sequences:notation} with notational conventions.
Thereafter, the constraints \textit{Basic Consistency}, \textit{GCL Consistency}, and \textit{Stream Policing} are introduced.
To illustrate the effects and the importance of each constraint, we refer to Fig.~\ref{fig:exec_seq} throughout these sections.

\subsubsection{Notational Conventions}\label{sec:execution_sequences:notation}
An execution sequence $\E$ specifies the transmission offsets $\T{}$ and the transmission delays $\D{}$. 
To model dropped frames of a TSN stream $f \in \TT$, we allow $\T{}$ and $\D{}$ to map to $\infty$, conveying the following semantics:
\begin{enumerate}[label=I\arabic*)]
  \item If $\T{[\v{k}{f}, \v{k+1}{f}],f} < \infty$ but $\D{[\v{k}{f}, \v{k+1}{f}],f} = \infty$, $f$ is dropped by $\v{k}{f}$ during its transmission over $[\v{k}{f}, \v{k+1}{f}]$.
  \item If $\D{[\v{k}{f}, \v{k+1}{f}],f} < \infty$ but $\T{[\v{k+1}{f}, \v{k+2}{f}],f} = \infty$, $f$ is dropped by $\v{k+1}{f}$ upon receipt.
\end{enumerate}

While $\D{[\v{k}{f}, \v{k+1}{f}],f} = \infty$ and I1 capture the message streams' view, we also require a formalism that tells $\v{k}{f}$ when it is allowed to start the transmission of the subsequent frame.\footnote{In fluid dynamics, these two points of view correspond to the Lagrangian and Eulerian observers, respectively.}
For this purpose, we define the \textit{effective} transmission delay $\eD{}$ to be equal to the upper PDB bound $\dpdbmax{}$, if I1 occurs, i.e.,
\begin{equation}\label{eq:eD}
  \eD{[\v{k}{f}, \v{k+1}{f}],f} = \begin{cases}
    \dpdbmax{[\v{k}{f}, \v{k+1}{f}],f} \quad &\text{for } \text{I1}, \\
    \D{[\v{k}{f}, \v{k+1}{f}],f} \quad &\text{else.}
    \end{cases}
\end{equation}

\subsubsection{Basic Consistency} \label{sec:execution_sequences:basic_consistency}
Basic Consistency restrains the execution sequence $\E$ to ensure mutual exclusion of transmission links (i.e., Ethernet links and 5G frequency channels), sequential transmission, isochronous talkers, and FIFO queuing.

\textbf{\TC} \label{sec:execution_sequences:basic_consistency:tc}
ensures that for every link $[u,v] \in E$, no two frames $f, f' \in \TT_{[u,v]}$ with $f \neq f'$ transmit at the same time, i.e.,
\begin{align*}
  \add{\T}{\eDtrans}{[u,v], f} &\leq \T{[u,v], f'}, \qquad \text{or}\\
  \add{\T}{\eDtrans}{[u,v], f'} &\leq \T{[u,v], f}.
\end{align*}
Fig.~\ref{fig:exec_seq:basic_consistency} violates this constraint by overlapping the transmission of frames $f_1$ and $f_2$ at $[u^N, v]$.

\textbf{\ST} \label{sec:execution_sequences:basic_consistency:st}
ensures that for every stream $f \in \TT$ and every $1 \leq k < n(f)$ that the transmission of $f$ over $[\v{k}{f}, \v{k+1}{f}]$ can only start after $f$ is fully received by $\v{k}{f}$, i.e.,
\begin{equation*}
  \add{\T}{\D}{[\v{k}{f}, \v{k+1}{f}], f} \leq \T{[\v{k+1}{f}, \v{k+2}{f}], f}.
\end{equation*}
Fig.~\ref{fig:exec_seq:basic_consistency} violates this constraint by starting the transmission of $f_2$ at $[u^N, v]$ before $f_2$'s transmission over $[u^D_2, u^N]$ completed.

\textbf{$\IT$} \label{sec:execution_sequences:basic_consistency:it}
ensures that for every stream $f \in \TT$ that the talker $\v{1}{f}$ starts its transmission as configured 
\begin{equation*}
  \T{[\v{1}{f}, \v{2}{f}], f} = \Smin{[\v{1}{f}, \v{2}{f}], f}.
\end{equation*}
Fig.~\ref{fig:exec_seq:basic_consistency} violates this constraint by starting the transmission of $f_1$ at $[u_1^D, u^N]$ too late.

\textbf{$\FIFO$} \label{sec:execution_sequences:basic_consistency:fifo}
ensures that for every link $[v,w] \in E$ and every two frames $f, f' \in \TT_{[v,w]}$, with respective previous hops $[u,v]$ and $[u',v]$, that the transmission order is identical to the receiving order, i.e,
\begin{align*}
  \add{\T}{\D}{[u,v], f} < \add{\T}{\D}{[u',v], f'} \\
    \iff \T{[v,w], f} < \T{[v,w], f'}.
\end{align*}
Fig.~\ref{fig:exec_seq:basic_consistency} violates this constraint by transmitting $f_2$ over $[u^N, v]$ before $f_1$, whereas $f_1$ arrives at $u^N$ before $f_2$.

\subsubsection{GCL Consistency}\label{sec:execution_sequences:gcl}
GCL Consistency restrains the execution sequence $\E$ to respect the configured GCL $\GCL([v,w])$ for every link $[v,w] \in E$.
For this purpose, let
\begin{equation*}
    \GCL([v,w]) = [(o_1,c_1), \ldots, (o_m,c_m)].
\end{equation*}
GCL Consistency consists of encapsulation and progress.

\textbf{$\GCLC$} \label{sec:execution_sequence:gcl:completeness}
ensures for every arrived frame $f \in \TT_{[v,w]}$, i.e., $\T{[v,w],f} < \infty$, that $v$ transmits $f$ within a configured window $(o_i, c_i)$, i.e.,
\begin{equation}\label{eq:gclc}
    o_i \leq \T{[v,w], f} \leq c_i - \dpdbtransmax{[v,w], f}.
\end{equation}
Fig.~\ref{fig:exec_seq:gcl_consistency} violates this constraint by starting the transmission of $f_1$ at $[u^N, v]$ before the gate opens.

\textbf{$\GCLS$}\label{sec:execution_sequence:gcl:soundness}
ensures for every transmission window $(o_i, c_i)$ and every point in time $o_i \leq t < c_i$ that, if the queue at $[v,w]$ is not empty, i.e., there exists a frame $f \in \TT_{[v,w]}$ with previous hop $[u,v]$ and
\begin{equation*}
  \add{\T}{\D}{[u,v], f} \leq t \leq \T{[v,w], f} < \infty, 
\end{equation*}
there exists $f' \in \TT_{[v,w]}$, potentially different from $f$, which is transmitting over $[v,w]$ at time $t$, i.e.,
\begin{equation*}
  \T{[v,w], f'} \leq t < \add{\T}{\eDtrans}{[v,w], f'}. 
\end{equation*}
Fig.~\ref{fig:exec_seq:gcl_consistency} violates this constraint by having a gap between $f_1$'s and $f_2$'s transmission at $[u^N, v]$, although the gate remains open and $f_2$ arrives at $u^N$ before $f_1$'s transmission finishes.

\subsubsection{Stream Policing} \label{sec:execution_sequences:stream_policing}
Stream Policing restrains the execution sequence $\E$ to drop frames correctly. 
There are two important aspects that have to be covered to correctly model wireless TSN behavior.
First, a frame is dropped by a TSN bridge if it
\begin{enumerate}[label=I\arabic*)]
  \item transmits longer than the PDB $\dpdb{}$, or
    \item arrives outside the PSFP interval.
\end{enumerate}
Second, $\E$ cannot drop frames arbitrarily, that is, if a frame is dropped, it must be because of I1 or I2.

\textbf{$\TP$}
ensures for every stream $f \in \TT$ that $f$ is discarded during transmission over $[\v{k}{f}, \v{k+1}{f}]$ if and only if the transmission duration takes longer than the PDB, i.e.,
\begin{align*}
  \Dtrans{[\v{k}{f}, \v{k+1}{f}],f} > \dpdbtransmax{[\v{k}{f}, \v{k+1}{f}],f} \\
    \iff \Dtrans{\clink{k}{f},f} = \infty.
\end{align*}
Fig.~\ref{fig:exec_seq:stream_policing} shows this behavior by dropping $f_2$ at $[u_2^D, u^N]$, i.e., $\Dtrans{[u_2^D, u^N], f_2} = \infty$, as indicated by the cross, after its transmission is not successfully completed within its PDB.

\textbf{$\PSFP$}
ensures for every stream $f \in \TT$ that $f$ is dropped by PSFP at $\v{k}{f}$ ($k > 1$), if and only if it arrives outside the PSPF interval, i.e.,
\begin{align*}
  \add{\T}{\D}{\pclink{k}{f},f} \notin \R{\v{k}{f},f} \\
    \iff \T{\clink{k}{f},f} = \infty.
\end{align*}
Fig.~\ref{fig:exec_seq:stream_policing} violates this constraint by dropping $f_1$ at $u^N$, i.e., $\T{[u^N, v], f_1} = \infty$, although it arrived within the PSFP interval.

\textbf{$\PC$}
is used for consistent modelling of the execution sequence $\E$, ensuring that once a frame is dropped, it remains dropped for all subsequent hops.
Thus, we require for every stream $f \in \TT$ and every hop $\clink{k}{f}$ that
\begin{equation} \label{eq:pc:1}
  \T{\clink{k}{f},f} = \infty \implies \D{\clink{k}{f},f} = \infty.
\end{equation}
Further, we require $\clink{k}{f}$ with $k < n - 1$ to satisfy
\begin{equation} \label{eq:pc:2}
  \D{\clink{k}{f},f} = \infty \implies \T{\nclink{k}{f},f} = \infty.
\end{equation}
While (\ref{eq:pc:2}) would also be satisfied by combining (\ref{eq:pc:1}) with $\textsf{PSFP}$ or $\textsf{SequentialTransmission}$, we prefer an explicit separation of constraint functionality.
Fig.~\ref{fig:exec_seq:stream_policing} violates this constraint by transmitting $f_2$ over $[u^N, v]$, i.e., $\T{[u^N, v], f_2} < \infty$, although it was previously dropped by $\TP$ during its transmission at $[u_2^D, u^N]$.

\subsection{Proof of Theorem~\ref{theorem:feasibility}} \label{appendix:feasibility}
By allocating sufficiently large 5G PDBs to satisfy~(\ref{eq:pdb2}) for stream $F$, the following holds true for all frames $f \in F$:
\begin{align}
  &\rel{F} \leq \prod_{[u,v] \in \path{F}} \Prob\big(\D{[u,v], f} \in \dpdb{[u,v], f}\big) \\
  = &\sum_{\E} \prod_{[u,v]} \Prob\big(\D{[u,v], f} \in \dpdb{[u,v], f} \mid \E \big) \times \Prob(\E), \label{eq:ap1}
\end{align}
By Definition~\ref{def:robustness}, the arrival time $A(f)$ is within the expected interval $\R{\vl{F}, f}$ if all packet delays along $\path{F}$ lie within their respective PDBs.
We can write 
\begin{align*}
  (\ref{eq:ap1}) &\leq \sum_{\E} \Prob\big(A(f) \in \R{\vl{F}, f} \mid \E\big) \times \Prob(\E) \\
   &= \Prob\big(A(f) \in \R{\vl{F}, f}\big).
\end{align*}
Hence, the probability of $f$ arriving within $\R{\vl{F}, f}$ exceeds the required $\rel{F}$.
Also, (iii) of Theorem~\ref{theorem:feasibility} constrains $\R{\vl{F}, f}$ to satisfy the latency and jitter bounds of $F$.

\subsection{Proof of Theorem~\ref{theorem:fips_robust}} \label{appendix:theorem_proof}
We start with some notational conventions:
First, compared to $\S{[u,v], B_i}$, which defines the transmission start of a frame batch at $[u,v]$, we want to verify when a frame $f \in B_i$ can start its transmission.
We therefore define the interval $\S{} = [\Smin{}, \Smax{}]$ to capture the (intended) minimum and maximum transmission times, i.e.,
\begin{align} \label{eq:sminmax}
  \Smin{[u,v], f} = \S{[u,v], B_i} \quad \text{and} \quad \Smax{[u,v], f} = \sum_{f' \in B_i \setminus \{f\}} \dpdbtransmax{[u,v], f'},
\end{align}
for each $[u,v] \in E$ and $f \in \TT_{[u,v]}$.
Here, $\dpdbtransmax{}$ solely captures the serialization delay (for both Ethernet links and 5G links).
Moreover, to have a clear mapping between frames $f$ and their corresponding batch, we write $\I{[u,v], f} = B_i$.

Given an execution sequence $\E = (\T{}, \D{})$, we define $\S{}^\E$ as the \textit{expected transmission offset for $\E$} with
\begin{equation} \label{eq:fips:se}
  \S{}^\E([u,v],f) = \Smin{[u,v],f} + \sum_{f' \in B^\E([u,v], f)} \eDtrans{[u,v],f'},
\end{equation}
for $[u,v] \in E$ and $f \in \TT_{[u,v]}$, where $B^\E([u,v], f)$ denotes frames $f' \in B([u,v], f)$ which are transmitted over $[u,v]$ before $f$, i.e., $\T{[u,v],f'} < \T{[u,v],f}$.
We commence with structured proofs to show the robustness of FIPS.

\begin{lemma}\label{lemma:fips:aux}
  Let $\S{}$ denote a schedule that concurs with FIPS, and $\Conf$ the derived FIPS configuration. 
  Then, for every execution sequence $\E = (\T{}, \D{})$ we have $\S{}^\E([u,v],f) \in \S{[u,v],f}$, for all $[u,v] \in E$ and $f \in \TT_{[u,v]}$.
\end{lemma}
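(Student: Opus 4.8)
The plan is to prove the two endpoint inequalities of the interval $\S{[u,v],f} = [\Smin{[u,v],f}, \Smax{[u,v],f}]$ separately, working straight from the definition (\ref{eq:fips:se}) of the expected transmission offset $\S{}^\E$ and the definition (\ref{eq:sminmax}) of $\Smin{}$ and $\Smax{}$. Denote by $B_i = \I{[u,v],f}$ the batch that contains $f$ at $[u,v]$, so that $\Smin{[u,v],f} = \S{[u,v], B_i}$. I would first record the structural fact that $B^\E([u,v],f) \subseteq B_i \setminus \{f\}$: by its definition this set collects exactly the $f' \in B_i$ with $\T{[u,v],f'} < \T{[u,v],f}$, and $f$ itself never qualifies (also in the degenerate case $\T{[u,v],f} = \infty$, i.e., $f$ already dropped on an earlier hop), which in particular makes $\S{}^\E([u,v],f)$ well defined no matter which frames are eventually discarded. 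The lower bound is then immediate, since the effective serialization delays $\eDtrans{[u,v],f'}$ are non-negative, so the sum in (\ref{eq:fips:se}) is non-negative and $\S{}^\E([u,v],f) \geq \Smin{[u,v],f}$.

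For the upper bound, the one estimate worth spelling out is the per-frame bound $\eDtrans{[u,v],f'} \leq \dpdbtransmax{[u,v],f'}$, which I would read off the definition (\ref{eq:eD}) of the effective (transmission) delay: in the drop-during-transmission case (I1) the effective delay is set to $\dpdbtransmax{[u,v],f'}$ by definition, while in the complementary case it equals the actual transmission duration $\Dtrans{[u,v],f'}$, which $\TP$ bounds by $\dpdbtransmax{[u,v],f'}$ precisely because the frame was not dropped during transmission. Summing this bound over $f' \in B^\E([u,v],f)$ and then enlarging the index set to $B_i \setminus \{f\}$ (adding only non-negative terms) yields
\begin{equation*}
  \S{}^\E([u,v],f) = \Smin{[u,v],f} + \sum_{f' \in B^\E([u,v],f)} \eDtrans{[u,v],f'} \leq \S{[u,v], B_i} + \sum_{f' \in B_i \setminus \{f\}} \dpdbtransmax{[u,v],f'} = \Smax{[u,v],f},
\end{equation*}
which is the claimed upper bound. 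Since the argument is entirely pointwise in $[u,v]$ and $f$, no induction is needed.

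I do not expect a genuine obstacle here: the statement is a preparatory interval bound, and its only subtlety is the bookkeeping of which frames lie in $B^\E([u,v],f)$ together with the uniform estimate on $\eDtrans{}$. Its role is to let the robustness proof of Theorem~\ref{theorem:fips_robust} argue hop-by-hop about arrival intervals without redoing this batch arithmetic each time; note that the scheduling constraints C1--C3 of the derived FIPS configuration are not used in this lemma and only enter afterwards, when the transmission starts of successive batches and of successive hops must be related.
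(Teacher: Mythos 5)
Your proof is correct and follows essentially the same route as the paper's: the lower bound by non-negativity of the sum in (\ref{eq:fips:se}), and the upper bound via $B^\E([u,v],f) \subseteq \I{[u,v],f}\setminus\{f\}$ combined with the per-frame estimate $\eDtrans{} \leq \dpdbtransmax{}$, proved by the same case split on (\ref{eq:eD}) using \TP{} in the non-dropped case. Your side remarks (well-definedness when $f$ is already dropped, and that C1--C3 are not needed here) are accurate but do not change the argument.
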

\begin{proof}
    \textit{Proof:}
    \pflet{$[u,v] \in E$ and $f \in \TT_{[u,v]}$}
    \step{<2>1}{$\SE{[u,v],f} \geq \Smin{[u,v],f}$}
    \begin{proof}
      \pf\ By definition of $\SE{[u,v],f}$ in (\ref{eq:fips:se}).
    \end{proof}
    \step{<2>2}{$\SE{[u,v],f} \leq \Smax{[u,v],f}$}
    \begin{proof}
      \step{<3>1}{$\IE{[u,v], f} \subseteq \I{[u,v], f} \backslash \{f\}$}
	\begin{proof}
	    \pf\ By definition of $\IE{[u,v], f}$.
	\end{proof}
	\step{<3>2}{$\eDtrans{[u,v],f'} \leq \dpdbtransmax{[u,v],f'}$ for all $f' \in \IE{[u,v],f}$}
	\begin{proof}
	    \step{<4>1}{$\T{[u,v], f'} < \infty$}
	    \begin{proof}
		\pf\ By definition of $\IE{[u,v], f'}$, we have $\T{[u,v], f'} < \T{[u,v], f}$
	    \end{proof}
	    \step{<4>2}{\case{$\Dtrans{[u,v],f'} = \infty$}}
	    \begin{proof}
		\pf\ $\eDtrans{[u,v],f'} = \dpdbtransmax{[u,v], f'}$ by (\ref{eq:eD}) with case assumption \stepref{<4>2} and step \stepref{<4>1}.
	    \end{proof}
	    \step{<4>3}{\case{$\Dtrans{[u,v],f'} < \infty$}}
	    \begin{proof}
		\step{<5>1}{$\eDtrans{[u,v],f'} = \Dtrans{[u,v],f'}$}
		\begin{proof}
		    \pf\ By definition of $\eD{[u,v],f'}$ in (\ref{eq:eD}) with case assumption \stepref{<4>3}.
		\end{proof}
		\step{<5>2}{$\Dtrans{[u,v],f'} \leq \dpdbtransmax{[u,v],f'}$}
		\begin{proof}
		    \pf\ By $\TP$ and case assumption \stepref{<4>3}.
		\end{proof}
		\qedstep
		\begin{proof}
		    \pf\ By steps \stepref{<5>1} and \stepref{<5>2}.
		\end{proof}
	    \end{proof}
	\end{proof}
	\qedstep
	\begin{proof}
	    \pf\ By steps \stepref{<3>1}, \stepref{<3>2}, we have
	    \begin{align*}
	      \SE{[u,v],f} \leq \Smin{[u,v],f} + \sum_{f' \in \I{[u,v], f} \backslash \{f\}} \dpdbtransmax{[u,v],f'} = \Smax{[u,v],f}
	    \end{align*}
	\end{proof}
    \end{proof}
    \qedstep
    \begin{proof}
	\pf\ By steps \stepref{<2>1} and \stepref{<2>2}. 
    \end{proof}
\end{proof}

\begin{lemma}\label{lemma:fips}
  Let $\S{}$ denote a schedule that concurs with FIPS, and $\Conf$ the derived FIPS configuration. 
    Then, for every execution sequence $\E = (\T{}, \D{})$ and every message stream $f \in \TT$, the following holds true:
    For $1 \leq l < n(f)$, if $f$ arrives at $\v{l}{f}$ within its PSFP enforced interval, i.e.,
    \begin{equation*}
      \add{\T}{\D}{\pclink{l}{f},f} \in \R{\v{l}{f},f},
    \end{equation*}
    then $\v{l}{f}$ starts the transmission of $f$ as expected by $\S{}^\E$, i.e.,
    \begin{equation*}
      \T{\clink{l}{f},f} = \SE{\clink{l}{f},f}.
    \end{equation*}
\end{lemma}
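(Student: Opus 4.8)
We must show that, under the hypothesis, the transmission of $f$ over $\clink{l}{f}$ starts precisely at the offset $\SE{\clink{l}{f},f}$ predicted by~(\ref{eq:fips:se}). Abbreviate $[u,w] = \clink{l}{f}$, let $B_i = \I{[u,w],f}$ be the batch of $f$ on this link, and let $[o_i,c_i]$ be its GCL window, so $o_i = \S{[u,w],B_i}$ and $c_i = \add{\S}{\dpdbmax}{[u,w],B_i}$. From the hypothesis and $\PSFP$, $f$ is not discarded by PSFP at $u$, so $\T{[u,w],f} < \infty$, and $\GCLC$ forces this transmission into some GCL window of $[u,w]$. The plan is to prove a stronger invariant by induction on the rank of a frame in the transmission order that $\E$ induces on $[u,w]$: \emph{every $g \in \TT_{[u,w]}$ with $\T{[u,w],g} < \infty$ is transmitted inside the window of its own batch $\I{[u,w],g}$, and $\T{[u,w],g} = \SE{[u,w],g}$}; the lemma is the case $g = f$.

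Two facts prime the induction. \emph{Arrival:} C1 gives $\S{[u,w],B_j} \ge \Rmax{u,g}$ for every batch $B_j$ on $[u,w]$ and every $g \in B_j$; with $\PSFP$ (a transmitted frame was not dropped, hence arrived within its PSFP interval), this means every transmitted frame of $B_j$ is already enqueued at $u$ by time $o_j = \S{[u,w],B_j}$. \emph{Batch fits its window:} the inequality $\eDtrans{[u,w],g} \le \dpdbtransmax{[u,w],g}$ established inside the proof of Lemma~\ref{lemma:fips:aux} — which by~(\ref{eq:eD}) holds also for a frame dropped mid-transmission, where the two sides coincide — yields $\sum_{g \in B_j} \eDtrans{[u,w],g} \le \dpdbtransmax{[u,w],B_j} \le \dpdbmax{[u,w],B_j} = c_j - o_j$. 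I would also record that the windows on $[u,w]$ are disjoint and ordered: C2 gives $o_{j+1} \ge c_j$, and $\GCLC$ confines each single transmission to one of them.

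The inductive step for a frame $g$ with batch $B_k = \I{[u,w],g}$ has two halves. \emph{No window before $k$:} if $g$ has a hop preceding $u$, C3 applied at that link, together with the definition of $\Rmin{u,g}$ (namely the start of $g$'s batch on that link plus $g$'s lower delay budget $\dpdbmin{}$ there), gives $\Rmin{u,g} \ge c_{k-1}$; since $g$ arrived within $\R{u,g}$ and, by $\ST$, cannot transmit before it arrives, $g$ is transmitted in no window with index $<k$ (if instead $g$ originates at $u$, its batch is a singleton and $\IT$ gives $\T{[u,w],g} = \Smin{[u,w],g} = \SE{[u,w],g}$ at once). \emph{Window $k$ and the value:} by the Arrival fact every transmitted frame of $B_k$ is queued by $o_k$, so $\GCLS$ forces the gate to transmit $B_k$'s frames back-to-back starting at $o_k$, $\TC$ forbids overlaps on $[u,w]$, and the Batch-fits fact makes them all complete by $c_k$; hence $g$ is transmitted in window $k$, in the $\E$-induced position, and — feeding in the induction hypothesis for the frame $g^{-}$ transmitted immediately before $g$ and using $\TC$ to pin $\T{[u,w],g}$ to $g^{-}$'s effective completion — at exactly $\T{[u,w],g} = \Smin{[u,w],g} + \sum_{g' \in \IE{[u,w],g}} \eDtrans{[u,w],g'} = \SE{[u,w],g}$. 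The base case (the first frame transmitted over $[u,w]$) is the same argument with an empty sum, and specializing to $g = f$ concludes.

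I expect the window-assignment bookkeeping to be the real obstacle: ruling out a frame "stealing" an earlier batch's slot, and a backlog from an earlier batch spilling into a later window. Backward leakage becomes clean once C3 and $\ST$ pin each frame's arrival past the preceding windows; forward containment is the delicate side, where $\GCLS$ (no idle gap while the queue is nonempty), $\GCLC$ together with C2 (each transmission lives in exactly one of a family of disjoint ordered windows), $\TC$ (no overlapping transmissions on $[u,w]$), and the "batch fits its window" estimate must be combined to certify that each batch is fully drained inside its own window before the next opens, and this certification has to be threaded correctly through the transmission-order induction. The routine parts are the first-hop case via $\IT$, the empty-sum base case, and the remark that a frame dropped mid-transmission still consumes $\dpdbtransmax{}$ of window time — exactly why $\eD{}$ was introduced in~(\ref{eq:eD}).
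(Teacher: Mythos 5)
Your proposal is correct and is essentially the paper's own argument: the paper proves the statement by picking a minimal counterexample in the per-link transmission order and splitting into three cases (transmission later than $\SE{}$, earlier than $\SE{}$ but not before $\Smin{}$, and before $\Smin{}$), which is precisely the contrapositive of your induction on transmission rank \ldash your ``no window before $k$'' half is the paper's third case (C3 together with the PSFP-interval construction forces a too-early frame to have arrived outside $\R{\v{l}{f},f}$, contradicting the hypothesis), and your back-to-back drain argument via \GCLS{}, \TC{}, and $\eDtrans{} \leq \dpdbtransmax{}$ is exactly how the paper handles the first two cases by pushing the deviation onto an earlier-transmitted frame. The only cosmetic difference is that you state the ``each batch completes inside its own GCL window'' invariant explicitly, whereas the paper leaves it implicit in the minimality of the chosen counterexample.
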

\begin{proof}
    \textit{Proof:}
    \define{$\V \subseteq E \times \TT$ with $(\clink{l}{f},f) \in \V$, if and only if
	\begin{pfenum}
	\item $\add{\T}{\D}{\pclink{l}{f},f} \in \R{\v{l}{f}, f}$, and
	\item  $\T{\clink{l}{f}, f} \neq \SE{\clink{l}{f}, f}$.
	\end{pfenum}
    }
    \assume{$\V \neq \emptyset$}
    \prove{False}
    \explan{Proof by contradiction.
	Assume $\V \neq \emptyset$ and pick $(\clink{l}{f},f) \in \V$ which is the ``first'' to deviate from the expected transmission offset (step \estepref{1}{1}).
	There are three possible cases to consider: 
	\begin{itemize}
	  \item Step \estepref{1}{4}: the frame $f$ starts its transmission too late, i.e., after $\SE{\clink{l}{f},f} < \T{\clink{l}{f},f}$,
	  \item Step \estepref{1}{5}: the frame $f$ starts its transmission too soon but still within $\S{\clink{l}{f},f}$, i.e., $\Smin{\clink{l}{f},f} \leq \T{\clink{l}{f},f} < \SE{\clink{l}{f},f}$, or
	  \item Step \estepref{1}{6}: the frame $f$ starts its transmission too soon and even before $\Smin{\clink{l}{f},f}$, i.e., $\T{\clink{l}{f},f} < \Smin{\clink{l}{f},f}$.
	\end{itemize}
	We derive a contradiction for each case.
    }
    \pflet{$\GCL(\clink{l}{f}) = [(o_1, c_1), \ldots, (o_m, c_m)]$}
    \step{<1>1}{\pick{ $(\clink{l}{f},f) \in \V$ with $\T{\clink{l},f}$ minimal}}
    \begin{proof}
        \pf\ Exists by assumption {\toplevel} and since $\V \subseteq E \times \TT$ is finite.
    \end{proof}
    \step{<1>3}{$\T{\clink{l}{f},f} < \infty$}
    \begin{proof}
	\pf\ By $\PSFP$ with step \stepref{<1>1} and definition \toplevel:1.
    \end{proof}
    \step{<1>4}{\pick{ $1 \leq i \leq m$ with $o_i = \Smin{\clink{l}{f},f}$}}
    \begin{proof}
	\pf\ Exists by construction of FIPS's GCL encoding.
    \end{proof}
    \step{<1>5}{\case{$\SE{\clink{l}{f},f} < \T{\clink{l}{f},f}$}}
    \begin{proof}
      \explan{In case the frame $f$ starts its transmission later than expected, we show that there must be some other frame $f'$ that is still transmitting over $\clink{l}{f}$ at time $\SE{\clink{l}{f},f}$ (step \estepref{2}{1}). 
	    We derive that $f'$ must also start its transmission later than expected (step \estepref{2}{4}), as otherwise its transmission would be aborted before $\SE{\clink{l}{f},f}$.
	    This, however, yields a contradiction to $\T{\clink{l}{f},f}$ being minimal (step \estepref{2}{5}).
        }
	\step{<2>1}{\pick{ $f' \in \TT_{\clink{l}{f}}$, with $f \neq f'$, such that $f'$ is still transmitting over $\clink{l}{f}$ at time $\SE{\clink{l},f}$, i.e., 
        \begin{align*}
	  \T{\clink{l}{f}, f'} \leq \SE{\clink{l}{f},f} < \add{\T}{\eDtrans}{\clink{l}{f}, f'}. 
        \end{align*}
        }}
        \begin{proof}
            \explan{
                The existence of $f'$ follows by $\GCLS$. 
                We therefore have to verify that its precondition is fulfilled.
                On the other hand, $f' \neq f$ as otherwise $f$ would not be delayed.
            }
	    \step{<3>1}{The egress queue at $\clink{l}{f}$ is not empty at time $\SE{\clink{l}{f},f}$, i.e.,
            \begin{align*}
	      \add{\T}{\D}{\pclink{l}{f}, f} \leq \SE{\clink{l}{f},f} \leq \T{\clink{l}{f}, f} < \infty
            \end{align*}}
            \begin{proof}
                \explan{We derive the inequality step by step.}
		\step{<4>1}{$\add{\T}{\D}{\pclink{l}{f}, f} \leq \Rmax{\v{l}{f}, f}$} 
                \begin{proof}
                    \pf\ By step \stepref{<1>1} and definition \toplevel:1.
                \end{proof}
		\step{<4>2}{$\Rmax{\v{l}{f}, f} \leq \Smin{\clink{l}{f},f}$} 
                \begin{proof}
                    \pf\ By C1.
                \end{proof}
		\step{<4>3}{$\Smin{\clink{l}{f}, f} \leq \SE{\clink{l}{f},f}$} 
                \begin{proof}
                    \pf\ By Lemma~\ref{lemma:fips:aux}.
                \end{proof}
                \qedstep
                \begin{proof}
                    \pf\ By steps \stepref{<4>1}, \stepref{<4>2}, \stepref{<4>3}, \stepref{<1>5} (case assumption), and \stepref{<1>3}, in that order.
                \end{proof}
            \end{proof}
            \qedstep
            \begin{proof}
                \pf\ $f'$ exists by $\GCLS$ with step \stepref{<3>1}.
		We have $f' \neq f$, since $\T{\clink{l}{f}, f'} \leq \SE{\clink{l}{f}, f} < \T{\clink{l}{f}, f}$ by case assumption \stepref{<1>5}.
            \end{proof}
        \end{proof}
	\pflet{$l'$ with $\v{l}{f} = \v{l'}{f'}$.}
	\step{<2>2}{$\add{\T}{\D}{\pclink{l'}{f'},f'} \in \R{\v{l}{f}, f'}$}
        \begin{proof}
	  \pf\ By $\PSFP$ with $\T{\clink{l}{f},f'} < \infty$ by step \stepref{<2>1}.
        \end{proof}
	\step{<2>3}{$\add{\SE}{\eDtrans}{\clink{l}{f}, f'} \leq \SE{\clink{l}{f}, f}$}
        \begin{proof}
            \explan{The proof of this inequality depends on whether $f$ and $f'$ are allowed to interleave.}
	    \step{<3>1}{\case{$f' \in \I{\clink{l}{f},f}$}}
            \begin{proof}
	      \explan{If $f$ and $f'$ are allowed to interleave, the inequality directly follows from the definition of $\IE{}$ and $\SE{}$.}
		\step{<4>1}{$f' \in \IE{\clink{l}{f},f}$}
        	\begin{proof}
		  \pf\ $\T{\clink{l}{f},f'} < \T{\clink{l}{f},f}$ by step \stepref{<2>1} and case assumption \stepref{<1>4}.
        	\end{proof}
        	\qedstep
        	\begin{proof}
		  \pf\ The inequality \stepref{<2>3} holds by step \stepref{<4>1} and definition of $\SE{}$ in (\ref{eq:fips:se}).
        	\end{proof}
            \end{proof}
	    \step{<3>2}{\case{$f' \notin \I{\clink{l}{f},f}$}}
            \begin{proof}
                \explan{If $f$ and $f'$ are strictly separated, we highly depend on C3 to yield the inequality \stepref{<2>3}.
                We show the inequality step by step.}
		\step{<4>1}{$\add{\SE}{\eDtrans}{\clink{l}{f}, f'} \leq \add{\Smax}{\eDtrans}{\clink{l}{f}, f'}$}
                \begin{proof}
                    \pf\ By Lemma~\ref{lemma:fips:aux}
                \end{proof}
		\step{<4>2}{$\add{\Smax}{\eDtrans}{\clink{l}{f}, f'} \leq \add{\Smax}{\dpdbtransmax}{\clink{l}{f}, f'}$}
                \begin{proof}
		  \explan{We check each case of the effective transmission delay's definition (\ref{eq:eD}), using that $\T{\clink{l}{f},f'} < \infty$ by step \stepref{<2>1}.}
		    \step{<5>1}{\case{$\D{\clink{l}{f},f'} = \infty$}}
                    \begin{proof}
		      \pf\ Then $\eDtrans{\clink{l}{f}, f'} = \dpdbtransmax{\clink{l}{f}, f'}$, since (\ref{eq:eD}) with I1 ($\T{\clink{l}{f},f'} < \infty$ by step \stepref{<2>1}).
                    \end{proof}
		    \step{<5>2}{\case{$\D{\clink{l}{f},f'} < \infty$}}
                    \begin{proof}
                	\pf\ Then 
                	\begin{equation*}
			  \eDtrans{\clink{l}{f}, f'} = \D{\clink{l}{f}, f'} \leq \dpdbtransmax{\clink{l}{f}, f'},  
                	\end{equation*}
                	 by (\ref{eq:eD}) and $\TP$.
                    \end{proof}
                    \qedstep
                    \begin{proof}
                	\pf\ By steps \stepref{<5>1} and \stepref{<5>2}.
                    \end{proof}
                \end{proof}
		\step{<4>3}{$\add{\Smax}{\dpdbtransmax}{\clink{l}{f}, f'} \leq \Rmin{\v{l}{f}, f}$}
                \begin{proof}
                    \explan{We show that C3's preconditions for the strict separation are satisfied.}
		    \step{<5>1}{$\R{\v{l}{f}, f'} \cap \R{\v{l}{f}, f} = \emptyset$} 
                    \begin{proof}
		      \pf\ By definition of $\I{}$ with case assumption \stepref{<3>2}. 
                    \end{proof}
		    \step{<5>2}{$\Rmin{\v{l}{f}, f'} \leq \Rmax{\v{l}{f}, f}$}
                    \begin{proof}
		      \pf\ By step \stepref{<2>1} and case assumption \stepref{<1>4}, we have $\T{\clink{l}{f},f'} \leq \T{\clink{l}{f},f}$.
			By step \stepref{<2>2} and \stepref{<1>1}, both $f'$ and $f$ arrive at $\v{l}{f}$ within their PSFP enforced intervals.
                	Therefore, $\FIFO$ yields
                	\begin{equation*}
			  \Rmin{\v{l}{f}, f'} \leq \add{\T}{\D}{\pclink{l'}{f'},f'} < \add{\T}{\D}{\pclink{l}{f},f} \leq \Rmax{\v{l}{f},f}.
                	\end{equation*}
                    \end{proof}
                    \qedstep
                    \begin{proof}
		      \pf\ Steps \stepref{<5>1} and \stepref{<5>2} yield $\Rmax{\v{l}{f},f'} < \Rmin{\v{l}{f},f}$.
                	Therefore, C3 yields the inequality of \stepref{<4>3}.
                    \end{proof}
                \end{proof}
		\step{<4>4}{$\Rmin{\v{l}{f},f} \leq \SE{\clink{l}{f},f}$}
                \begin{proof}
                    \pf\ By C1 and Lemma~\ref{lemma:fips:aux}.
                \end{proof}
                \qedstep
                \begin{proof}
                    \pf\ The inequality \stepref{<2>3} holds by steps \stepref{<4>1}--\stepref{<4>4}.
                \end{proof}
            \end{proof}
            \qedstep
            \begin{proof}
                \pf\ The inequality \stepref{<2>3} is shown for both cases \stepref{<3>1} and \stepref{<3>2}.
            \end{proof}
        \end{proof}
	\step{<2>4}{$\SE{\clink{l}{f}, f'} < \T{\clink{l}{f}, f'}$}
        \begin{proof}
            \explan{We show the inequality step by step.}
	    \step{<3>1}{$\add{\SE}{\eDtrans}{\clink{l}{f}, f'} \leq \SE{\clink{l}{f}, f}$}
            \begin{proof}
        	\pf\ By step \stepref{<2>3}.
            \end{proof}
	    \step{<3>2}{$\SE{\clink{l}{f},f} < \add{\T}{\eDtrans}{\clink{l}{f}, f'}$}
            \begin{proof}
        	\pf\ By step \stepref{<2>1}.
            \end{proof}
            \qedstep
            \begin{proof}
	      \pf\ By steps \stepref{<3>1} and \stepref{<3>2}, subtracting $\eDtrans{\clink{l}{f},f'}$ from both sides.
            \end{proof}
        \end{proof}
        \qedstep
        \begin{proof}
	  \pf\ Step \stepref{<2>1} and case assumption \stepref{<1>5} yield $\T{\clink{l}{f}, f'} < \T{\clink{l}{f}, f}$, contradicting step \stepref{<1>1} with $(\clink{l}{f}, f) \in \V$ by steps \stepref{<2>2} and \stepref{<2>4}.
        \end{proof}
    \end{proof}
    \step{c<2>1}{\case{$\Smin{\clink{l}{f},f} \leq \T{\clink{l}{f},f} < \SE{\clink{l}{f},f}$}}
    \begin{proof}
	\explan{
	  In case the frame $f$ starts it transmission earlier than expected, but still within $\S{\clink{l}{f},f}$, there must be some $f' \in \IE{\clink{l}{f},f}$ which is transmitted directly before $f$ (steps \estepref{2}{1} and \estepref{2}{2}).
	    We conclude that $f'$ is also transmitted earlier than expected (step \estepref{2}{3}), yielding a contradiction to step \stepref{<1>1}.
	}
	\step{<2>1}{$\IE{\clink{l}{f},f} \neq \emptyset$}
	\begin{proof}
	  \assume{$\IE{\clink{l}{f},f} = \emptyset$}
	    \prove{False}
	    \explan{Proof by contradiction.}
	    \step{<3>1}{$\SE{\clink{l}{f},f} = \Smin{\clink{l}{f},f}$}
	    \begin{proof}
	      \pf\ By definition of $\SE{}$ in (\ref{eq:fips:se}).
	    \end{proof}
	    \qedstep
	    \begin{proof}
		\pf\ Step \stepref{<3>1} contradicts case assumption \stepref{c<2>1}.
	    \end{proof}
	\end{proof}
	\step{<2>2}{\pick{ $f' \in \IE{\clink{l}{f},f}$ with $f' \in \TT$ and $\IE{\clink{l}{f},f} = \IE{\clink{l}{f},f'} \cup \{f'\}$}}
	\begin{proof}
	  \pf\ Let $\IE{\clink{l}{f},f} = \{f_1, \ldots, f_m\}$, with $m \geq 1$ by step \stepref{<2>1}, and order $f_i$ such that
	    \begin{equation*}
	      \T{\clink{l}{f}, f_1} < \T{\clink{l}{f},f_2} < \ldots < \T{\clink{l}{f}, f_m}.
	    \end{equation*}
	    Then, $f_m$ satisfies $\IE{\clink{l}{f},f} = \IE{\clink{l}{f},f_m} \cup \{f_m\}$ by definition of $\IE{}$.
	\end{proof}
	\pflet{$l'$ with $\v{l}{f} = \v{l'}{f'}$.}
	\step{<2>3}{$\T{\clink{l}{f}, f'} < \SE{\clink{l}{f},f'}$}
	\begin{proof}
	    \explan{We show the inequality step by step.}
	    \step{<3>1}{$\add{\T}{\eDtrans}{\clink{l}{f},f'} \leq \T{\clink{l}{f},f}$}
	    \begin{proof}
	      \pf\ By $\TC$, because $\T{\clink{l}{f},f'} < \T{\clink{l}{f},f}$ by step \stepref{<2>2}.
	    \end{proof}
	    \step{<3>2}{$\T{\clink{l}{f},f} < \SE{\clink{l}{f},f}$}
	    \begin{proof}
		\pf\ By case assumption \stepref{c<2>1}.
	    \end{proof}
	    \step{<3>3}{$\SE{\clink{l}{f},f} = \add{\SE}{\eDtrans}{\clink{l}{f},f'}$}
	    \begin{proof}
	      \pf\ By step \stepref{<2>2} and definition of $\SE{}$ in (\ref{eq:fips:se}).
	    \end{proof}
	    \qedstep
	    \begin{proof}
	      \pf\ By steps \stepref{<3>1}--\stepref{<3>3}, when subtracting $\eDtrans{\clink{l}{f},f'}$ from both sides.
	    \end{proof}
	\end{proof}
	\qedstep
	\begin{proof}
	  \pf\ By step \stepref{<2>2}, we have $\T{\clink{l}{f},f'} < \T{\clink{l}{f},f}$.
	    Since $\T{\clink{l}{f},f} < \SE{\clink{l}{f},f}$ by case assumption \stepref{c<2>1} and $\SE{\clink{l}{f},f} < \infty$ by definition (\ref{eq:fips:se}), we also have $\T{\clink{l}{f},f'} < \infty$.
	    Therefore, $\PSFP$ yields 
	    \begin{equation*}
	      \add{\T}{\D}{\pclink{l'}{f'},f'} \in \R{\v{l}{f}, f'}.
	    \end{equation*}
	    But then, step \stepref{<2>3} and $\T{\clink{l}{f},f'} < \T{\clink{l}{f},f}$ contradict step \stepref{<1>1}.
	\end{proof}
    \end{proof}
    \step{c<2>2}{\case{$\T{\clink{l}{f},f} < \Smin{\clink{l}{f},f}$}}
    \begin{proof}
        \explan{
	  In case the frame $f$ starts transmission earlier than expected, even before $\Smin{\clink{l}{f},f}$, we show that there must exist a GCL window $(o_j, c_j)$ during which $f$ is transmitted (step \estepref{2}{1}).
            But by construction of FIPS's GCL Encoding, there exists another frame $f'$ whose expected transmission starts at $o_j$ (step \estepref{2}{2}).
	    With C3 and the construction of the FIPS's PSFP intervals, we show in step \estepref{2}{3} that $f$ would arrive at $\v{l}{f}$ outside its PSFP enforce interval \ldash unfolding the desired contradiction.
        }
	\step{<2>1}{\pick{ $1 \leq j < i$ such that $o_j \leq \T{\clink{l}{f},f} < c_j$}}
        \begin{proof}
            \explan{
        	We start in \estepref{3}{1} to show that such $j$ exists with $1 \leq j \leq m$, before restricting it to $j < i$ in \estepref{3}{2}.
            }
	    \step{<3>1}{\pick{ $1 \leq j \leq m$ such that $o_j \leq \T{\clink{l}{f},f} < c_j$}}
            \begin{proof}
        	\pf\ By $\GCLC$.
            \end{proof}
            \step{<3>2}{$j < i$}
            \begin{proof}
        	\assume{$j \geq i$}
        	\prove{False}
        	\explan{
        	    Proof by contradiction.
		    Assume $j \geq i$ and derive that $f$ does not start its transmission earlier than $\Smin{\clink{l}{f},f}$ -- a contradiction to the case assumption \estepref{1}{6}.
        	}
		\step{<4>1}{$o_i \leq o_j \leq \T{\clink{l}{f},f}$}
        	\begin{proof}
        	    \pf\ By assumption \stepref{<3>2}, definition of FIPS's GCL Encoding, and step \stepref{<2>1}.
        	\end{proof}
        	\qedstep
        	\begin{proof}
        	    \pf\ Steps \stepref{<4>1} and \stepref{<1>4} contradict the case assumption \stepref{c<2>2}.
        	\end{proof}
            \end{proof}	
            \qedstep
            \begin{proof}
        	\pf\ By steps \stepref{<3>1} and \stepref{<3>2}.
            \end{proof}
        \end{proof}
	\step{<2>2}{\pick{ $f' \in \TT_{\clink{l}{f}}$ with $o_j = \Smin{\clink{l}{f},f'}$}}
        \begin{proof}
            \pf\ Exists by construction of FIPS's GCL Encoding.
        \end{proof}
	\step{<2>3}{$\add{\T}{\D}{\pclink{l}{f},f} \notin \R{\v{l}{f}, f}$}
        \begin{proof}
	  \explan{We show $\add{\T}{\D}{\pclink{l}{f},f} < \Rmin{\v{l}{f},f}$ step by step.}
	    \step{<3>1}{$\add{\T}{\D}{\pclink{l}{f},f} \leq \T{\clink{l}{f},f}$}
            \begin{proof}
        	\pf\ By $\ST$.
            \end{proof}
	    \step{<3>2}{$\T{\clink{l}{f},f} < c_j$}
            \begin{proof}
        	\pf\ By step \stepref{<2>1}.
            \end{proof}
	    \step{<3>3}{$c_j = \add{\Smax}{\dpdbtransmax}{\clink{l}{f},f'}$}
            \begin{proof}
        	\pf\ By step \stepref{<2>2} and by construction of FIPS's GCL Encoding.
            \end{proof}
	    \step{<3>4}{$\add{\Smax}{\dpdbtransmax}{\clink{l}{f},f'} \leq \Rmin{\v{l}{f}, f}$}
            \begin{proof}
        	\explan{We show that C3's preconditions for the strict separation are satisfied.}
		\step{<4>1}{$\R{\v{l}{f},f'} \cap \R{\v{l}{f},f} = \emptyset$}
        	\begin{proof}
		  \pf\ By C3 with $\Smin{\clink{l}{f}, f'} = o_j < o_i = \Smin{\clink{l}{f},f}$ (steps \stepref{<2>2}, \stepref{<2>1}, and \stepref{<1>4}).
        	\end{proof}
		\step{<4>2}{$\Rmin{\v{l}{f},f'} \leq \Rmax{\v{l}{f},f}$}
        	\begin{proof}
		  \assume{$\Rmin{\v{l}{f},f'} > \Rmax{\v{l}{f},f}$}
        	    \prove{False}
		    \explan{Proof by contradiction}
        	    \step{<5>1}{$o_i < o_j$}
        	    \begin{proof}
        		\pf\ With C3 and C1, we have
        		\begin{equation*}
			  o_i = \Smin{\clink{l}{f},f} < \add{\Smax}{\dpdbtransmax}{\clink{l}{f},f} \leq \Rmin{\v{l}{f},f'} \leq \Smin{\clink{l}{f},f'} = o_j
        		\end{equation*}
        	    \end{proof}
        	    \qedstep
        	    \begin{proof}
        		\pf\ Step \stepref{<5>1} contradicts step \stepref{<3>1}.
        	    \end{proof}
        	\end{proof}
		\qedstep
		\begin{proof}
		  \pf\ Steps \stepref{<4>1} and \stepref{<4>2} yield $\Rmax{\v{l}{f},f'} < \Rmin{\v{l}{f},f}$.
		    Therefore, C3 yields the inequality of \stepref{<3>4}.
		\end{proof}
            \end{proof}
            \qedstep
            \begin{proof}
	      \pf\ Steps \stepref{<3>1}--\stepref{<3>4} yield $\add{\T}{\D}{\pclink{l}{f},f} < \Rmin{\v{l}{f},f}$.
            \end{proof}
        \end{proof}
        \qedstep
        \begin{proof}
	  \pf\ Step \stepref{<2>3} contradicts $(\clink{l}{f},f) \in \V$ (step \stepref{<1>1}) with definition \toplevel:1.
        \end{proof}
    \end{proof}
    \qedstep
    \begin{proof}
	\pf\ All three cases \stepref{<1>5}, \stepref{c<2>1}, and \stepref{c<2>2} yield a contradiction.
    \end{proof}
\end{proof}

\begin{theorem}\label{theorem:fips}
  A TSN configuration $C$ derived by FIPS robustly schedules every stream $f \in \TT$.
\end{theorem}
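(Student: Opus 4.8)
The plan is to reduce Theorem~\ref{theorem:fips} to the two preceding lemmas and finish with a short induction along the path of each stream. I would fix an arbitrary execution sequence $\E = (\T{}, \D{})$, a stream $f \in \TT$ with path $(\v{1}{f}, \ldots, \vl{f})$, and an index $k$ with $1 \le k \le n(f)$, and assume the robustness precondition of Definition~\ref{def:robustness}, namely $\D{\clink{l}{f}, f} \in \dpdb{\clink{l}{f}, f}$ for all $1 \le l < k$. Since $\dpdb{\clink{l}{f}, f}$ is a bounded interval, this also keeps $\D{\clink{l}{f}, f}$ finite, so $f$ is not dropped on the links under consideration. The goal is to establish $\add{\T}{\D}{\pclink{k}{f}, f} \in \R{\v{k}{f}, f}$, i.e., the robustness postcondition.

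First I would prove, by induction on $l = 1, \ldots, k$, the statement ``$f$ arrives at $\v{l}{f}$ within $\R{\v{l}{f}, f}$'' (for $l \ge 2$ this reads $\add{\T}{\D}{\pclink{l}{f}, f} \in \R{\v{l}{f}, f}$; for $l = 1$ it is the degenerate statement that the talker releases $f$ inside the window $\R{\v{1}{f}, f}$). The base case $l = 1$ follows from the $\IT$ constraint, which forces $\T{\clink{1}{f}, f} = \Smin{\clink{1}{f}, f}$, together with the FIPS construction of $\R{\v{1}{f}, f}$ around the talker's fixed release; one also records $\SE{\clink{1}{f}, f} = \Smin{\clink{1}{f}, f}$ there, since FIPS does not batch at the talker. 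For the inductive step, suppose $f$ arrives at $\v{l}{f}$ within $\R{\v{l}{f}, f}$ for some $l < k$. Then Lemma~\ref{lemma:fips} applies (as $l < k \le n(f)$) and yields $\T{\clink{l}{f}, f} = \SE{\clink{l}{f}, f}$, so the arrival of $f$ at $\v{l+1}{f}$ equals $\SE{\clink{l}{f}, f} + \D{\clink{l}{f}, f}$ with $\D{\clink{l}{f}, f} \in \dpdb{\clink{l}{f}, f}$ by the precondition. Writing $B = \I{\clink{l}{f}, f}$, the lower bound comes from $\SE{\clink{l}{f}, f} \ge \Smin{\clink{l}{f}, f} = \S{\clink{l}{f}, B}$ (by the definition~(\ref{eq:fips:se}) of $\SE{}$) and $\D{\clink{l}{f}, f} \ge \dpdbmin{\clink{l}{f}, f}$, which sum to $\Rmin{\v{l+1}{f}, f}$ exactly as FIPS defines it. The upper bound uses Lemma~\ref{lemma:fips:aux} to get $\SE{\clink{l}{f}, f} \le \Smax{\clink{l}{f}, f}$, then adds $\D{\clink{l}{f}, f} \le \dpdbmax{\clink{l}{f}, f}$ and compares the total against $\add{\S}{\dpdbmax}{\clink{l}{f}, B} = \Rmax{\v{l+1}{f}, f}$. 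Since the arrival lies in $\R{\v{l+1}{f}, f}$, the $\PSFP$ constraint does not discard $f$ at $\v{l+1}{f}$, so the induction hypothesis is re-established for $l+1$. Taking $l = k$ gives the postcondition, and since $\E$, $f$ and $k$ were arbitrary, this is precisely Definition~\ref{def:robustness} for every stream.

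The step I expect to demand the most care is the upper-bound estimate inside the inductive step, because that is where per-frame and per-batch quantities must be reconciled: $\SE{\clink{l}{f}, f}$ accumulates only the serialization delays $\eDtrans{}$ of the batch members transmitted before $f$, and $\D{\clink{l}{f}, f}$ is a single frame's delay, whereas the accumulated batch budget $\dpdb{\clink{l}{f}, B}$ counts serialization once per frame but charges propagation and processing only once on Ethernet links, and on 5G links is the min/max envelope of the individual 5G PDBs rather than any sum. Verifying $\SE{\clink{l}{f}, f} + \dpdbmax{\clink{l}{f}, f} \le \add{\S}{\dpdbmax}{\clink{l}{f}, B}$ in both the Ethernet and the 5G case, and pinning down the base case precisely (what ``arrival at the talker'' means and why the FIPS-constructed $\R{\v{1}{f}, f}$ contains it), is the real content; everything else is an immediate consequence of Lemmas~\ref{lemma:fips:aux} and~\ref{lemma:fips} together with the construction of $\GCL$ and $\R{}$ in Section~\ref{sec:fips:configuration}.
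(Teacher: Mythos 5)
Your proposal is correct and follows essentially the same route as the paper's own proof: an induction along the hops of the stream, with the base case discharged by \IT{} and the FIPS construction of the talker-side interval, and the inductive step obtained by applying Lemma~\ref{lemma:fips} to pin the transmission offset to $\SE{}$, Lemma~\ref{lemma:fips:aux} to confine $\SE{}$ to $[\Smin{}, \Smax{}]$, and the PDB precondition together with the FIPS definition of $\R{}$ to place the arrival inside the PSFP window. Your closing remark correctly isolates the only step the paper leaves implicit, namely the reconciliation of the per-frame bound $\Smax{} + \dpdbmax{}$ with the per-batch quantity $\add{\S}{\dpdbmax}{[u,v], B_i} = \Rmax{}$ on Ethernet versus 5G links.
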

\begin{proof}
    \textit{Proof:}
    \pflet{$f \in \TT$ frame}
    \pflet{$\v{k+1}{f} \in \path{f}$ TSN bridge}
    \pflet{$\E = (\T{}, \D{})$ valid execution scheme satisfying (5)}
    \explan{With Definition 1, we prove the statement with induction over $k$.}
    \step{<1>1}{$\D{[\v{l}{f}, \v{l+1}{f}], f} \in \dpdb{[\v{l}{f}, \v{l+1}{f}], f}, \quad \forall 1 \leq l \leq k.$}
    \begin{proof}
      \pf\ Precondition (5) of robustness definition.
    \end{proof}
    \step{<1>2}{\case{$k = 1$}}
    \begin{proof}
      \step{<2>1}{$\D{[\v{1}{f}, \v{2}{f}], f} \in \dpdb{[\v{1}{f}, \v{2}{f}], f}$}
      \begin{proof}
	\pf\ By step \stepref{<1>1} with $l = 1$.
      \end{proof}
      \step{<2>2}{$\T{[\v{1}{f}, \v{2}{f}], f} = \Smin{[\v{1}{f}, \v{2}{f}], f}$}
      \begin{proof}
	\pf\ By \IT.
      \end{proof}
      \qedstep
      \begin{proof}
	\explan{We show that $f$ arrives within its PSFP interval $[\Rmin{\v{2}{f}, f}, \Rmax{\v{2}{f}, f}]$.}
	\step{<3>1}{$\add{\T}{\D}{[\v{1}{f}, \v{2}{f}], f} \geq \Rmin{\v{2}{f}, f}$}
	\begin{proof}
	  \step{<4>1}{$\add{\T}{\D}{[\v{1}{f}, \v{2}{f}], f} = \add{\Smin}{\D}{[\v{1}{f}, \v{2}{f}], f}$}
	  \begin{proof}
	    \pf\ By Step \stepref{<2>2}.
	  \end{proof}
	  \step{<4>2}{$\add{\Smin}{\D}{[\v{1}{f}, \v{2}{f}], f} \geq \add{\Smin}{\dpdbmin}{[\v{1}{f}, \v{2}{f}], f}$}
	  \begin{proof}
	    \pf\ By Step \stepref{<1>1}.
	  \end{proof}
	  \step{<4>3}{$\add{\Smin}{\dpdbmin}{[\v{1}{f}, \v{2}{f}], f} = \Rmin{\v{2}{f}, f}$}
	  \begin{proof}
	    \pf\ By construction of the PSFP intervals in FIPS.
	  \end{proof}
	  \qedstep
	  \begin{proof}
	    \pf\ By steps \stepref{<4>1}--\stepref{<4>3}, in that order.
	  \end{proof}
	\end{proof}
	\step{<3>2}{$\add{\T}{\D}{[\v{1}{f}, \v{2}{f}], f} \leq \Rmax{\v{2}{f}, f}$}
	\begin{proof}
	  \step{<4>1}{$\add{\T}{\D}{[\v{1}{f}, \v{2}{f}], f} = \add{\Smin}{\D}{[\v{1}{f}, \v{2}{f}], f}$}
	  \begin{proof}
	    \pf\ By Step \stepref{<2>2}.
	  \end{proof}
	  \step{<4>2}{$\add{\Smin}{\D}{[\v{1}{f}, \v{2}{f}], f} \leq \add{\Smax}{\D}{[\v{1}{f}, \v{2}{f}], f}$}
	  \begin{proof}
	    \pf\ By (\ref{eq:sminmax}).
	  \end{proof}
	  \step{<4>3}{$\add{\Smax}{\D}{[\v{1}{f}, \v{2}{f}], f} \leq \add{\Smax}{\dpdbmax}{[\v{1}{f}, \v{2}{f}], f}$}
	  \begin{proof}
	    \pf\ By Step \stepref{<1>1}.
	  \end{proof}
	  \step{<4>4}{$\add{\Smax}{\dpdbmax}{[\v{1}{f}, \v{2}{f}], f} = \Rmax{\v{2}{f}, f}$}
	  \begin{proof}
	    \pf\ By construction of the PSFP intervals in FIPS.
	  \end{proof}
	  \qedstep
	  \begin{proof}
	    \pf\ By steps \stepref{<4>1}--\stepref{<4>4}, in that order.
	  \end{proof}
	\end{proof}
	\qedstep
	\begin{proof}
	  \pf\ By steps \stepref{<3>1} and \stepref{<3>2}.
	\end{proof}
      \end{proof}
    \end{proof}
    \step{<1>3}{\case{$k > 1$}}
    \begin{proof}
      \step{<2>1}{$\add{\T}{\D}{[\v{k-1}{f}, \v{k}{f}], f} \in \R{\v{k}{f}, f}$}
      \begin{proof}
	\pf\ By induction hypothesis.
      \end{proof}
      \step{<2>2}{$\T{[\v{k}{f}, \v{k+1}{f}], f} = \SE{[\v{k}{f}, \v{k+1}{f}], f}$}
      \begin{proof}
	\pf\ By Lemma~\ref{lemma:fips} with Step \stepref{<2>1}.
      \end{proof}
      \step{<2>3}{$\T{[\v{k}{f}, \v{k+1}{f}], f} \in \S{[\v{k}{f}, \v{k+1}{f}], f}$}
      \begin{proof}
	\pf\ By Lemma~\ref{lemma:fips:aux} with Step \stepref{<2>2}.
      \end{proof}
      \qedstep
      \begin{proof}
	\explan{We show that $f$ arrives within its PSFP interval $[\Rmin{\v{k+1}{f}, f}, \Rmax{\v{k+1}{f}, f}]$.}
	\step{<3>1}{$\add{\T}{\D}{[\v{k}{f}, \v{k+1}{f}], f} \geq \Rmin{\v{k+1}{f}, f}$}
	\begin{proof}
	  \step{<4>1}{$\add{\T}{\D}{[\v{k}{f}, \v{k+1}{f}], f} \geq \add{\Smin}{\D}{[\v{k}{f}, \v{k+1}{f}], f}$}
	  \begin{proof}
	    \pf\ By Step \stepref{<2>3}.
	  \end{proof}
	  \step{<4>2}{$\add{\Smin}{\D}{[\v{k}{f}, \v{k+1}{f}], f} \geq \add{\Smin}{\dpdbmin}{[\v{k}{f}, \v{k+1}{f}], f}$}
	  \begin{proof}
	    \pf\ By Step \stepref{<1>1}.
	  \end{proof}
	  \step{<4>3}{$\add{\Smin}{\dpdbmin}{[\v{k}{f}, \v{k+1}{f}], f} = \Rmin{\v{k+1}{f}, f}$}
	  \begin{proof}
	    \pf\ By construction of the PSFP intervals in FIPS.
	  \end{proof}
	  \qedstep
	  \begin{proof}
	    \pf\ By steps \stepref{<4>1}--\stepref{<4>3}, in that order.
	  \end{proof}
	\end{proof}
	\step{<3>2}{$\add{\T}{\D}{[\v{k}{f}, \v{k+1}{f}], f} \leq \Rmax{\v{k+1}{f}, f}$}
	\begin{proof}
	  \step{<4>1}{$\add{\T}{\D}{[\v{k}{f}, \v{k+1}{f}], f} \leq \add{\Smax}{\D}{[\v{k}{f}, \v{k+1}{f}], f}$}
	  \begin{proof}
	    \pf\ By Step \stepref{<2>3}.
	  \end{proof}
	  \step{<4>2}{$\add{\Smax}{\D}{[\v{k}{f}, \v{k+1}{f}], f} \geq \add{\Smax}{\dpdbmax}{[\v{k}{f}, \v{k+1}{f}], f}$}
	  \begin{proof}
	    \pf\ By Step \stepref{<1>1}.
	  \end{proof}
	  \step{<4>3}{$\add{\Smax}{\dpdbmax}{[\v{k}{f}, \v{k+1}{f}], f} = \Rmax{\v{k+1}{f}, f}$}

	  \begin{proof}
	    \pf\ By construction of the PSFP intervals in FIPS.
	  \end{proof}
	  \qedstep
	  \begin{proof}
	    \pf\ By steps \stepref{<4>1}--\stepref{<4>3}, in that order.
	  \end{proof}
	\end{proof}
	\qedstep
	\begin{proof}
	  \pf\ By steps \stepref{<3>1} and \stepref{<3>2}.
	\end{proof}
      \end{proof}
    \end{proof}
    \qedstep
    \begin{proof}
      \pf\ By induction, the statement holds for all $1 \leq k < n(f)$.
    \end{proof}
\end{proof}

\end{document}